\documentclass[12pt,oneside,english]{article}
\usepackage{babel}
\usepackage[T1]{fontenc}
\usepackage[latin9]{inputenc}
\usepackage{simpler-wick}
\usepackage{geometry}
\geometry{verbose,tmargin=3.0cm,bmargin=3.0cm,lmargin=2.5cm,rmargin=2.5cm}
\setlength{\parskip}{\smallskipamount}
\setlength{\parindent}{15pt}

\usepackage{amstext,amsthm,amssymb}
\usepackage{bm}
\usepackage[justification=centering]{caption}
\usepackage{stmaryrd}
\usepackage{graphicx}
\usepackage{mathtools}
\usepackage{wasysym}
\usepackage{mathrsfs}
\usepackage{bbm}
\usepackage{esint}
\usepackage{dsfont}
\usepackage{xcolor}
\usepackage{verbatim}
\usepackage{subfigure}
\usepackage{setspace}
\usepackage{url}
\usepackage[percent]{overpic}
\usepackage{tikz}
	\usetikzlibrary{calc}
	\usetikzlibrary{decorations.text}
\usepackage{relsize}
\usepackage{titlesec}
\usepackage{upgreek}
\usepackage{yhmath}
\usepackage[hidelinks]{hyperref}

\setcounter{tocdepth}{2} 

\makeatletter


\numberwithin{equation}{section}
\numberwithin{figure}{section}
\newtheoremstyle{theorem} 
	{\topsep}                    
	{\topsep}                    
	{\itshape}                   
	{}                           
	{\scshape\bfseries}                   
	{.}                          
	{.5em}                       
	{}  
\theoremstyle{theorem}
\newtheorem{thm}{Theorem}[section]

\newtheorem{prop}[thm]{Proposition}

\theoremstyle{definition}

\newtheoremstyle{remark} 
	{\topsep}                    
	{\topsep}                    
	{}                           
	{}                           
	{\scshape\bfseries}          
	{.}                          
	{.5em}                       
	{}  
\theoremstyle{remark}
\newtheorem{rmk}{Remark}[section]
\newtheorem{ex}{Example}[section]
\numberwithin{thm}{section}




\def\Z{{\mathbb{Z}}}

\def\R{{\mathbb{R}}}
\def\C{{\mathbb{C}}}
\def\Zpos{{\mathbb{Z}_{>0}}}
\def\Znneg{{\mathbb{Z}_{\geq0}}}
\def\UpperHalf{\mathbb{H}}
\def\im{\mathrm{Im}\,}


\newcommand{\black}{\color{black}}
\newcommand{\term}[1]{\textbf{#1}}
\newcommand{\mybar}[3]{
	\mathrlap{\hspace{#2}\overline{\scalebox{#1}[1]{\phantom{\ensuremath{#3}}}}}\ensuremath{#3}
}

\renewcommand{\Re}{\textnormal{Re}}
\renewcommand{\Im}{\textnormal{Im}}

\newcommand{\AssoAlg}{\mathcal{A}^{\textnormal{(chi)}}}
\newcommand{\FullAssoAlg}{\mathcal{A}}
\newcommand{\bk}{\mathbf{k}}
\newcommand{\bl}{\bm{\ell}}
\newcommand{\thectt}{\alpha}
\newcommand{\EnvSymFer}{\FullEnvSymFer^{\textnormal{(chi)}}}
\newcommand{\FullEnvSymFer}{\mathcal{SF}}

\newcommand{\SymFer}{\textnormal{\textbf{SF}}}
\newcommand{\SymFerBar}{\overline{\SymFer}}
\newcommand{\AlgEta}[1]{{\upeta}_{#1}}
\newcommand{\AlgChi}[1]{{\upchi}_{#1}}
\newcommand{\AlgEtaBar}[1]{\mybar{0.97}{-0.6pt}{\upeta}_{#1}}
\newcommand{\AlgChiBar}[1]{\mybar{1}{-0.6pt}{\upchi}_{#1}}
\newcommand{\AlgK}{\mathsf{k}}

\newcommand{\UniEnve}{\mathcal{U}}
\newcommand{\ChiralFock}{\FullFock^{\textnormal{(chi)}}}
\newcommand{\FullFock}{\mathcal{F}}
\newcommand{\GroundPartner}{\boldsymbol{\omega}}
\newcommand{\Ground}{\boldsymbol{\mathbbm{1}}}
\newcommand{\GroundEta}{\boldsymbol{\theta}}
\newcommand{\GroundChi}{\boldsymbol{\xi}}
\newcommand{\HolCurrentEta}{\boldsymbol{\eta}}
\newcommand{\HolCurrentChi}{\boldsymbol{\chi}}
\newcommand{\AntiHolCurrentEta}{\overline{\boldsymbol{\eta}}}
\newcommand{\AntiHolCurrentChi}{\overline{\boldsymbol{\chi}}}
\newcommand{\AlgL}[1]{{\mathbf{L}}_{#1}}
\newcommand{\AlgLBar}[1]{\mybar{0.85}{-1pt}{\mathbf{L}}_{#1}}
\newcommand{\AlgC}{\mathbf{C}}
\newcommand{\Vir}{\textnormal{\textbf{Vir}}}
\newcommand{\VirBar}{\mybar{1.02}{-0.6pt}{\textnormal{\textbf{Vir}}}}
\newcommand{\id}{\textnormal{id}}
\newcommand{\SugL}[1]{{\mathsf{L}}_{#1}}
\newcommand{\SugLBar}[1]{\mybar{0.6}{-0.3pt}{\mathsf{L}}_{#1}}
\newcommand{\EnvSugVir}{{\mathcal{V}ir}}
\newcommand{\End}{\textnormal{End}}
\newcommand{\BosChiralFock}{\ChiralFock_{\textnormal{bos}}}
\newcommand{\FerChiralFock}{\ChiralFock_{\textnormal{fer}}}
\newcommand{\spn}{\mathrm{span}}
\newcommand{\Verma}{\mathcal{V}}
\newcommand{\HiWe}{\mathcal{H}}
\newcommand{\Staggered}{\mathcal{S}}
\newcommand{\BosFullFock}{\FullFock_{\textnormal{bos}}}
\newcommand{\FerFullFock}{\FullFock_{\textnormal{fer}}}

\newcommand{\dist}{\mathsf{d}}
\newcommand{\ii}{\mathbbm{i}}
\newcommand{\domain}{\Omega}
\newcommand{\bigCorrFun}[3]{\big\langle #3 \big\rangle_{#1; #2}}
\newcommand{\BigCorrFun}[3]{\Big\langle #3 \Big\rangle_{#1; #2}}

\newcommand{\AnalyFun}[2]{C^{\,\omega}\big( #1 ; #2 \big)}
\newcommand{\Conf}[2]{\textnormal{Conf}_{#1}( #2 )}

\newcommand{\Green}{\mathbf{G}}

\newcommand{\sgntr}{\textnormal{sgn}}
\newcommand{\bdry}{\partial}
\newcommand{\HarmOfGreen}{\mathbf{g}}
\newcommand{\susyfactor}{\textnormal{C}}
\newcommand{\parity}{\mathsf{p}}

\newcommand{\widesim}[2][1.5]{
	\mathrel{\overset{#2}{\scalebox{#1}[1]{$\sim$}}}
}
\newcommand{\OPE}{\widesim[1.5]{}}

\newcommand{\field}{\boldsymbol{\varphi}}

\newcommand{\dd}{\textnormal{d}}

\newcommand{\Bij}{\mathfrak{B}}

\newcommand{\no}[1]{\, \rotatebox[]{90}
	{\scalebox{.7}{$\ \bullet\,\bullet$}}\,#1\,\rotatebox[]{90}{\scalebox{.7}{$\ \bullet\,\bullet$}}\,}

\titleformat{\subsection}[runin]
{\normalfont\bfseries}{\thesubsection.}{0pt}{\hspace{0.5em}}[.]
\titleformat{\section}
{\normalfont\Large\bfseries}{\thesection}{0pt}{\hspace{1em}}

\usepackage{enumitem}
\setlist[description]{leftmargin=4.3cm,labelindent=1cm,rightmargin=2.5cm}

\begin{document}

\title{\vspace{1cm} \Large\scshape\bfseries
Symplectic fermions in general domains
}

\setcounter{footnote}{1}
\author{{David Adame\hspace{0.9pt}-Carrillo\footnote{\texttt{david.adame-carrillo@helsinki.fi}}}\vspace{0.1cm}}


\date{\textit{\normalsize Department of Mathematics and Statistics, University of Helsinki }}

\maketitle

\vspace{30pt}

\begin{abstract}
We review the basic features of a logarithmic conformal field theory that arise in the context of the scaling limit of lattice models.
The theory of interest is the symplectic fermions, whose central charge is $-2$.
We provide an explicit construction of its space of fields as a logarithmic Fock space, and discuss its logarithmic structure as a representation of the Virasoro algebra.
The construction of the correlation functions is presented following the ideas of the bootstrap approach.
The text aims to be accessible to readers with little or no expertise in conformal field theory.
\end{abstract}

\vfill

\pagenumbering{gobble}

\newpage
\tableofcontents
\pagenumbering{arabic}
\setcounter{page}{2}
\newpage

\titlespacing*{\paragraph}{0pt}{5pt}{0pt}

\section{Introduction}
\label{sec:intro}

\paragraph{Statistical mechanics and conformal field theory.}
~It is a well-established conjecture  
that the scaling limit of two-dimensional critical models of statistical mechanics can be described by two-dimensional conformal field theories (CFTs).
Yet, making precise mathematical sense of such a statement has proven to be challenging in most cases.
This conjecture has inspired many beautiful mathematical works on the emergence of conformally covariant objects in the scaling limit of probabilistic models.

The results in the literature that are more relevant for the present contribution are those regarding multipoint correlation functions.
One example are the works on the dimer model by Kenyon (\cite{Kenyon-conf_inv,Kenyon-GFF}),
in which the author proved the scaling limit of the dimer height function to be the gaussian free 
field~(GFF),
an object that is well-known to carry the free boson CFT in its corre\-lation kernels---see, e.g., \cite{KangMakarov-GFF_CFT}.
Another instance of such results are the series of outstanding works 
\linebreak[1]
(\cite{Hongler-thesis,CHI-spin_correlations,CHI-primary_field_correlations}) in which the correlation functions of all Ising primary fields with general boundary conditions were proven to converge to conformally covariant functions in the scaling limit.

A key feature of any two-dimensional CFT is that its space of fields enjoys a rich algebraic structure.
In particular, the set of local fields constitutes a representation of the Virasoro algebra, an infinite-dimensional Lie algebra that encodes the conformal symmetries of the theory.
The correlation functions of all the (infinitely many) fields are conjecturally expected to be recovered from the scaling limit of the corresponding discrete model too.

Recently (\cite{ABK-DGFF_free_boson}), and for the first time, the full structure of a CFT has been obtained as the scaling limit of a discrete model.
The results can be summarized as follows.
Following the ideas developed in~\cite{HKV-CFT_on_lattice}, the space of lattice local fields of the discrete gaussian free field (DGFF),
can be rendered a Virasoro representation.
Moreover, this space is in one-to-one correspondence with ---i.e., it is isomorphic to--- the space of local fields of the free boson CFT.
Via this correspondence, the (suitably renormalized) correlation functions of DGFF lattice local fields converge, in the scaling limit, to the CFT correlation functions of the corresponding free boson local fields.

\paragraph{Logarithmic CFT.}
~In certain critical lattice models, non-local observables give rise to logarithmic singularities in correlation functions.
The CFTs that appear in the scaling limit of these models are known as logarithmic conformal field theories (logCFTs).
Besides featuring fields that produce logarithmic singularities in correlation functions, the space of fields of a logCFT exhibits a more intricate algebraic structure than non-logarithmic CFTs.
See~\cite{CreutzigRidout-review} for a review on logCFT.

A noteworthy example of this phenomenon are crossing probabilities in critical percolation.
In fact, the CFT arguments that Cardy used in his original derivation of the celebrated Cardy--Smirnov formula (\cite{Cardy_percolation,Smirnov-percolation}) can be accommodated only 
in a logarithmic CFT;
see, e.g., \cite{MathieuRidout-percolation_logCFT}.
More recently, the understanding of percolation as a logCFT has seen further developments:
certain probabilistic observables have been identified as CFT fields that produce logarithmic dependencies within correlation functions---see~\cite{Camia-conf_conn_prob,CamiaFeng-log_percolation}.

A prominent example of a logarithmic CFT is the symplectic fermions.
The theory was first expounded in the full plane by Kausch in~\cite{Kausch-SF}, building on earlier works as~\cite{Guraire-log_op_CFT,Kausch-curiosities, GabKau-local}.
It was subsequently reformulated in the language of vertex operator algebras (VOA) in \cite{Abe-symfer_VOA}),
and has ever since been studied by the VOA community; see~\cite{DavydovRunkel_holo-symfer} and references therein.

Associated to every CFT there is a ---typically real--- parameter $c$ called \textit{central charge} that is encoded in the Virasoro structure of the space of fields.
In essence, the central charge determines how a system responds to the introduction of a macroscopic length scale (\cite[Section 5.4]{yellow_book}).
For the symplectic fermions, this parameter takes the value $-2$.
What makes this theory particularly appealing are its deep connections with several discrete probabilistic models that have long been predicted to fall within the universality class of CFTs at central charge $c=-2$.

\paragraph{Dimers, spanning trees, dense polymers and sandpiles.}
~In the recent work \cite{AdRu-fDGFF_to_symfer}, the \emph{fermionic discrete gaussian free field} (fDGFF) has been proven to converge in the scaling limit to the symplectic fermions CFT in the sense of \cite{ABK-DGFF_free_boson}.
This result is consistent with longstanding predictions that several discrete models should be described by (logarithmic) CFTs at $c=-2$, particularly in light of the various connections between the fDGFF and these models.
Motivated by these developments, the goal of this text is therefore twofold:
first, to provide a thorough review of the notions of symplectic fermions that arise in the analysis of this scaling-limit convergence; and second, to present these ideas in a way accessible to a broader audience, including readers with limited familiarity with conformal field theory.

The fDGFF naturally arises in tree and forest models ---see, e.g., \cite{CJSSS_trees-and-forests} or the more recent \cite{BCHS-fDGFF_forests}---, which can be exploited to compute correlation functions in general domains (\cite{CCRR_fGFF}).
Spanning trees have been classically linked to (log)CFT at $c=-2$;
see, e.g., \cite{MahieuRuelle-observables} where the connection goes via the abelian sandpile model or \cite{Saleur-polymers_CFT} in the context of dense polymers.
In more modern contributions like \cite{LPW-UST_CFT}, the uniform spanning tree has been linked with the symplectic fermions CFT in a random geometry setting.

The dimer model provides another key example.
It is widely regarded as a free fermionic theory that exhibits characteristic features of a conformally covariant theory with central charge $-2$ ---see \cite{IPRH-dimers_boundary_logCFT} and references therein---.
Furthermore, recent progress in \cite{Ada-discrete SF} has shown that certain observables in this model carry the algebraic structure of the symplectic fermions.

The abelian sandpile model is a particularly interesting case. 
It has been argued to be described by a logarithmic CFT at $c=-2$ that is \emph{not} equivalent to the symplectic fermions ---see \cite{Ruelle-review} and references therein---.
However, the two theories share some common fields, which has been exploited in \cite{CCRR_fGFF} to compute (the scaling limit of) the correlation functions of the height-one field in general domains of the complex plane.

Dense polymers (or spanning forests) provide yet another classical example of a critical model displaying features of a conformally covariant system with central charge $-2$ (\cite{Duplantier_dense-polymers}) that has been linked to the symplectic fermions CFT 
(\cite{Saleur-polymers_CFT,Ivashkevich-dense_polymers}).
A closely related model are the Hamiltonian walks, which exhibits too characteristics of a CFT at $c=-2$ (\cite{DuDa_hamiltonian-walks}).

\paragraph{Organization of the paper.}
~Section~\ref{sec: algebra} is devoted to the construction of the algebraic structures relevant to the symplectic fermions CFT,
that is, the symplectic fermions (asso\-ciative) algebra and its logarithmic Fock space representation.
We start the discussion from the \linebreak chiral version of these structures, in Sections~\ref{subsec: chiral module}--\ref{subsec: staggered}, in order to highlight the main algebraic features in a simpler setting; the full non-chiral module is then constructed in Section~\ref{subsec: non-chiral module}.
In Section~\ref{subsec: glimpse corr fun}, we provide some remarks on the algebraic structure of the space of fields that are relevant for the construction of correlation functions in later sections.
In particular, we observe that the space of fields possesses a family of (non-trivial) automorphisms which translates into an inherent ambiguity in correlation functions.
In Section~\ref{sec: corrfun}, the construction of the correlation functions of the theory is presented.
We start by motivating the choice of characterizing properties of the correlation functions in Sections~\ref{subsec: L-1}~and~\ref{subsec: OPE}.
Finally, we provide a characterization of the correlation functions in Section~\ref{subsec: characterization} along with some examples and remarks; its proof is provided in Section~\ref{subsec: proof}.

\vfill
\textbf{Acknowledgments:}
The author is grateful to Kalle Kyt\"ol\"a and Shinji Koshida for insightful discussions about the topics presented in this text.
The author thanks, too, Augustin Lafay, Philippe Ruelle, Wioletta Ruszel, and Dirk Schuricht for interesting discussions about topics related to this work;
and Leigh Foster, Vladimir Bo\v{s}kovi\'c, and Tom\'as Alcalde-L\'opez for valuable comments on this manuscript.
\newline
The author was supported by the Research Council of Finland (project 346309:~Finnish Centre of Excellence in Randomness and Structures,
"FiRST").
\newline
Part of this research was performed while the author was visiting the Institute for Pure and Applied Mathematics (IPAM), which is supported by the National Science Foundation (Grant No.~DMS-1925919).

\newpage
\section{The space of fields}
\label{sec: algebra}

In this section, we construct the space of fields of the symplectic fermions CFT.
In (bulk) two-dimensional conformal field theory, the space of fields is a representation of two commuting Virasoro algebras.
Naturally, this is also the case for the symplectic fermions CFT.

However, we start our discussion of the space of fields by constructing a chiral version of it, that is, a module of a single copy of the Virasoro algebra with a similar structure to the full non-chiral module.
Our motivation to do so is twofold.
Firstly, it is notationally lighter to work in the chiral setting, which allows us to discuss its logarithmic structure more transparently.
Secondly, by starting from the chiral module, we can appreciate a non-trivial difference with non-logarithmic CFT in the construction of the full non-chiral theory.
Gaberdiel and Kausch observed in~\cite{GabKau-local} that,
the non-chiral module of a logCFT needs to satisfy a condition that is trivially satisfied by the modules that appear in non-logarithmic CFT.
This fact renders the construction of the non-chiral module of a logCFT notably more involved.

Section~\ref{subsec: chiral module} contains the definition of the chiral logarithmic Fock space~$\ChiralFock$ of the symplectic fermions.
In Section~\ref{subsec: ground states and currents}, we single out the vectors within the module that play an important role in the construction of the symplectic fermions CFT throughout the article.
In Section~\ref{subsec: log action}, we outline how the logarithmic Fock space becomes a Virasoro represen\-tation through the Sugawara construction, and we highlight the logarithmic nature of it.
The space~$\ChiralFock$ can be proven to have, as a Virasoro subrepresentation, a staggered module---the proof of this fact and the necessary definitions are presented in Sections~\ref{subsec: hw reps} and~\ref{subsec: staggered}.
Finally, the full non-chiral space of fields of the symplectic fermions CFT is defined in Section~\ref{subsec: non-chiral module} as the (non-chiral) logarithmic Fock space of the symplectic fermions.
We conclude the section by giving some remarks in the direction of correlation functions in Section~\ref{subsec: glimpse corr fun}.

\subsection{The chiral module}
\label{subsec: chiral module}

Consider the vector space
\begin{align*}
\mathsf{V}_{\SymFer}
\coloneqq
\left( \bigoplus_{k\in\Z} \C \AlgEta{k} \right)
\oplus
\left( \bigoplus_{k\in\Z} \C \AlgChi{k} \right)\,
\end{align*}
and the associative algebra
\begin{align*}
\AssoAlg
\coloneqq
\bigoplus_{n\in\Znneg}
\big(
\mathsf{V}_{\SymFer}
\big)^{\otimes n}\,,
\end{align*}
where $\mathsf{V}^{\otimes n}$ denotes the tensor product of $n$ copies of $\mathsf{V}$.
In~$\AssoAlg$, we define the anticommutators by ${\{ v , u \} \coloneqq v \otimes u + u \otimes v}$.
The \term{chiral symplectic fermions associative algebra} is then defined as the quotient
\begin{align*}
\EnvSymFer
\coloneqq
\AssoAlg
\;\big/\;
\big(\,
\{\AlgEta{k},\AlgEta{\ell}\} ,
\{\AlgChi{k},\AlgChi{\ell}\} ,
\{\AlgEta{k},\AlgChi{\ell}\} - k \delta_{k+\ell}
\,:\,
k,\ell\in\Z
\,\big),
\end{align*}
of $\AssoAlg$ by a two-sided ideal generated by the symplectic fermions anticommutation relations.
That is, in~$\EnvSymFer$, we have
\begin{align*}
\big\{ \AlgEta{k} , \AlgChi{\ell} \big\}
\, = \, k \, \delta_{k+\ell} \,,
\mspace{50mu}
\big\{ \AlgEta{k} , \AlgEta{\ell} \big\}
=
\big\{ \AlgChi{k} , \AlgChi{\ell} \big\}
\, = \,
0 \,.
\end{align*}
To keep the notation light, we write~$v_1v_2\cdots v_n$ for the equivalence class of~${v_1 \otimes v_2 \otimes \cdots \otimes v_n}$ in the quotient~$\EnvSymFer$.

%

Alternatively, the algebra~$\EnvSymFer$ can be conceived as (a quotient of) the universal enve\-loping algebra of the Lie superalgebra of the symplectic fermions\footnote{
The Lie superalgebra is
$
\SymFer
\coloneqq
\left( \bigoplus_{k\in\Z} \C \AlgEta{k} \right)
\oplus
\left( \bigoplus_{k\in\Z} \C \AlgChi{k} \right)
\oplus
\C\AlgK\,,
$
with Lie superbrackets
\begin{align*}
\big\{ \AlgEta{k} , \AlgChi{\ell} \big\}
\, = \, k \, \delta_{k+\ell} \, \AlgK \,,
\mspace{50mu}
\big\{ \AlgEta{k} , \AlgEta{\ell} \big\}
=
\big\{ \AlgChi{k} , \AlgChi{\ell} \big\}
\, = \, 0 \, = \,
\big[ \AlgK , \SymFer \big]
\,.
\end{align*}
Then we have $\EnvSymFer \cong \UniEnve(\SymFer)/(\AlgK-1)$, where $\UniEnve(\SymFer)$ is the universal enveloping algebra of $\SymFer$.
}.
The Poincar\'e-Birkhoff-Witt~(PBW)~theorem states that~$\EnvSymFer$ admits the $\C$-vector space basis consisting of the set of finite words
\begin{align*}
	\AlgEta{k_r}\cdots\AlgEta{k_2}\AlgEta{k_1}\AlgChi{\ell_s}\cdots\AlgChi{\ell_2}\AlgChi{\ell_1}
\end{align*}
with $k_1 < k_2 < \cdots < k_r$ and $\ell_1 < \ell_2 < \cdots < \ell_s$.

The \term{chiral logarithmic Fock space}\footnote{The space can indeed be conceived as an alternating tensor algebra, hence it is a \textit{fermionic Fock space}. Besides, in Section~\ref{subsec: log action} we will see that it has a Virasoro structure that makes it suitable for \textit{logarithmic} CFT.} of the symplectic fermions is then the quotient
\begin{align*}
\ChiralFock
\, \coloneqq \,
\EnvSymFer
\, \big/ \,
\big(\, \AlgEta{k}\,,\; \AlgChi{k} \,\colon\, k>0  \,\big)
\end{align*}
of $\EnvSymFer$ by the left ideal generated by~$\AlgEta{k}$ and~$\AlgChi{k}$ with~$k>0$.
We conceive~$\ChiralFock$ as an \mbox{$\EnvSymFer$-module}, and we refer to $\AlgEta{k}$ and $\AlgChi{k}$ as the~\term{current modes} when we regard them as linear operators on~$\ChiralFock$.

We refer to the elements of~$\ChiralFock$ both as \term{states} or \term{fields}---this is a consequence of the \emph{field-state correspondence}, a key observation in the understanding of quantum field theories with conformal symmetries \cite{yellow_book}.
For any state~$\field\in\ChiralFock$, we let $(\EnvSymFer)\field$ denote the~sub\-module of $\ChiralFock$ cyclically generated by~$\field$.

The vector space~$\ChiralFock$ admits the PBW-type~basis 
\begin{align}\label{eq: chiral basis}
\AlgEta{-k_r}\cdots\AlgEta{-k_2}\AlgEta{-k_1}
\AlgChi{-\ell_s}\cdots\AlgChi{-\ell_2}\AlgChi{-\ell_1}
\big[\, 1 \,\big]
\end{align}
with~$r,s\in\Znneg$,
and the orderings
\mbox{${0 \leq k_1 < k_2 < \cdots < k_r}$} and \mbox{${0 \leq \ell_1 < \ell_2 < \cdots < \ell_s}$},
where \mbox{${[\,1\,]\in\ChiralFock}$} is the equivalence class of the vector~$1\in\EnvSymFer$.

The space~$\ChiralFock$ admits a natural $\Z_2$-grading in terms of the basis~\eqref{eq: chiral basis},
which we refer to as \term{parity} and is set to be $\boldsymbol{+}$~(plus) for the \term{bosonic states}
\begin{align*}
	\BosChiralFock
	\coloneqq
	\spn_\C
	\Big\{\,
	\AlgEta{-k_r}\cdots\AlgEta{-k_2}\AlgEta{-k_1}
	\AlgChi{-\ell_s}\cdots\AlgChi{-\ell_2}\AlgChi{-\ell_1}
	\big[\, 1 \,\big]
	\,\Big\vert\,
	s+r=0 \textnormal{ mod } 2
	\,\Big\}\,,
\end{align*}
and $\boldsymbol{-}$~(minus) for the \term{fermionic states}
\begin{align*}
	\FerChiralFock\,
	\coloneqq
	\spn_\C
	\Big\{\,
	\AlgEta{-k_r}\cdots\AlgEta{-k_2}\AlgEta{-k_1}
	\AlgChi{-\ell_s}\cdots\AlgChi{-\ell_2}\AlgChi{-\ell_1}
	\big[\, 1 \,\big]
	\,\Big\vert\,
	s+r=1 \textnormal{ mod } 2
	\,\Big\}\,.
\end{align*}

\subsection{Ground states and currents}
\label{subsec: ground states and currents}

We start the discussion of the structure of $\ChiralFock$ by defining the distinguished vectors that will play a fundamental role in the construction of the correlation functions in Section~\ref{sec: corrfun}.

There are four linearly independent vectors in $\ChiralFock$ that are annihilated by all positive current modes.
We refer to them as the \term{ground states} of $\ChiralFock$.
Those are the bosonic states
\begin{align*}
	\Ground & \coloneqq \big[\, \AlgChi{0}\AlgEta{0} \,\big]
	\mspace{10mu}\text{ and }
	\mspace{30mu}
	& (\text{\term{identity field}})
	\\
	\GroundPartner & \coloneqq \big[\, 1 \,\big],
	\mspace{30mu}
	& \phantom{\bigg\vert}(\text{\term{logarithmic partner} of the identity field})
\end{align*}
and the fermionic states
\begin{align*}
\mathllap{ \GroundChi \coloneqq -\big[\, \AlgChi{0} \,\big]
	\mspace{44mu}}
\text{ and }
\mathrlap{\mspace{42mu}
	\GroundEta \coloneqq -\big[\, \AlgEta{0} \,\big],}
\end{align*} 
where the square brackets~$[\,\cdot\,]$ denote the equivalence class in the quotient~$\ChiralFock$ of vectors in $\EnvSymFer$.
The names of $\GroundPartner$ and $\Ground$ will be justified later on in Remark~\ref{rmk: bosonic ground states} and Example~\ref{ex: identity}.
We refer to the fields~$\GroundChi$ and $\GroundEta$ as the \term{ground fermions}.

We note here that, as a representation of $\EnvSymFer$, the chiral logarithmic Fock space~$\ChiralFock$ is indecomposable \mbox{---$\GroundPartner$} is \mbox{cyclic---} yet reducible \mbox{---$\GroundPartner\notin(\EnvSymFer)\Ground$---}.

\begin{figure}[b!]
	\vspace{-5pt}
	\centering
	\begin{overpic}[scale=0.778]{./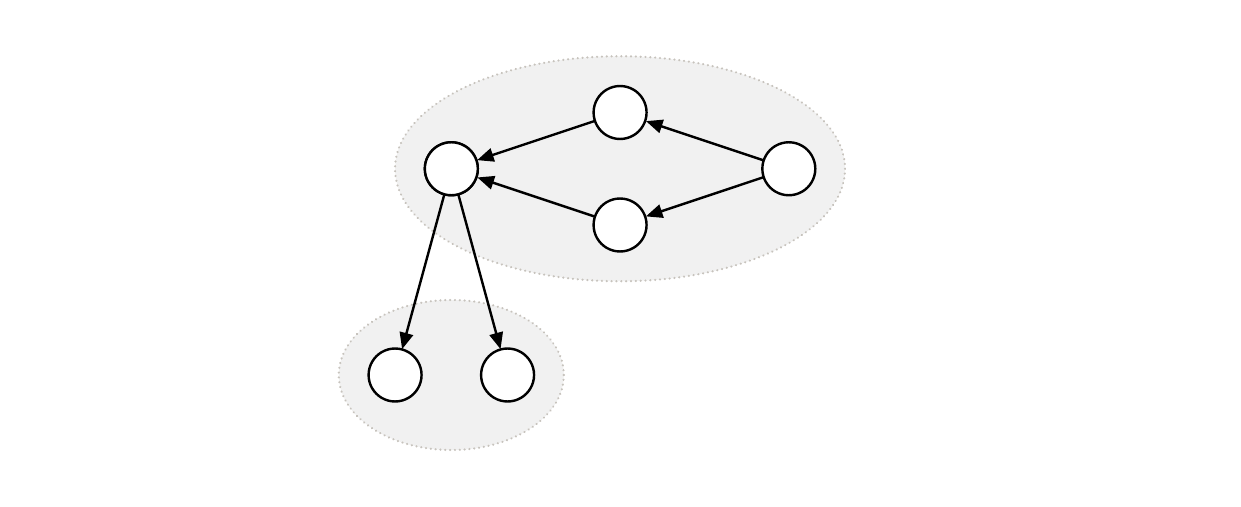}
		\put(49.2,31){\large $\GroundChi$}
		\put(49.2,21.6){\large $\GroundEta$}
		\put(35.45,26.15){\large $\Ground$}
		\put(62.6,26.3){\large $\GroundPartner$}
		\put(30.8,10){\large $\HolCurrentEta$}
		\put(39.9,10.1){\large $\HolCurrentChi$}
		\put(56,21.6){$-\AlgEta{0}$}
		\put(56,31.5){$-\AlgChi{0}$}
		\put(41,21.6){$-\AlgChi{0}$}
		\put(42,31.5){$\AlgEta{0}$}
		\put(29,16.8){$\AlgEta{-1}$}
		\put(40.2,16.8){$\AlgChi{-1}$}
		\begin{tikzpicture}
		\filldraw[white] (0,5) circle (2pt);
		\filldraw[white] (0,0) circle (2pt);
		\node (A) at (3.05,2.85) [] {}; 
		\node (B) at (13.15,2.85) [] {};
		\node (C) at (3.65,0) [] {}; 
		\node (D) at (8,0) [] {};
		\def\myshift#1{\raisebox{-2.5ex}}
		\draw [->,thick,color=white,postaction={decorate,decoration={text along path,text align=center,text={|\myshift|Ground states}}}] (A) to [bend left=45]  (B);
		\def\myshift#1{\raisebox{1ex}}
		\draw [->,thick,color=white, postaction={decorate,decoration={text along path,text align=center,text={Currents}}}]      (C) to [bend right=55] (D);
		\end{tikzpicture}
	\end{overpic}
	\centering
	\caption{
		The ground states, the currents, \\ and their relation via the action of the current modes.
	}
	\label{fig: ground states - chiral symalg}
\end{figure}

The fermionic fields
\begin{align*}
	\mathllap{\HolCurrentChi
		\coloneqq 
		\AlgChi{-1}\Ground
		=
		\big[\, \AlgChi{-1}\AlgChi{0}\AlgEta{0}\,\big]
		\mspace{30mu}}
	\text{ and }
	\mathrlap{\mspace{30mu}
		\HolCurrentEta
		\coloneqq
		\AlgEta{-1}\Ground
		=
		\big[\, \AlgEta{-1}\AlgChi{0}\AlgEta{0}\,\big]}
\end{align*}
play a special role in the construction of the symplectic fermions CFT.
In particular, within correlation functions, they carry the algebraic structure of the theory as their Fourier modes, which is then leveraged to bootstrap all the correlation functions of the theory.
They are called \term{currents}---this name is justified below in Remark~\ref{rmk: currents derivatives of fermions}.

In Figure~\ref{fig: ground states - chiral symalg}, we show visually the ground states and the currents, and their relations via current modes. 

\subsection{Logarithmic Virasoro action}
\label{subsec: log action}

The chiral logarithmic Fock space can be rendered a Virasoro representation via a Sugawara construction,
i.e.,
one can define Virasoro modes as 
an infinite sum quadratic on the current modes.
In the present case, the Virasoro action can be easily checked to be of the type that one expects in logarithmic CFT.

For $n\in\Z$, define the linear operator~$\ChiralFock\rightarrow\ChiralFock$ given by the formal sum
\begin{align}\label{eq: Sugawara}
	\SugL{n}
	\coloneqq
	\sum_{k \in \Z}
	\no{\AlgChi{n-k}\AlgEta{k}}\,,
\end{align}
where the normal ordering $\no{\cdots}$ is defined as follows:
\begin{align*}
	\no{\AlgChi{k}\AlgEta{\ell}}
	\coloneqq
	\begin{cases}
		\phantom{-} \AlgChi{k}\AlgEta{\ell} & \mspace{10mu} \textnormal{ if } \; k\leq \ell\,,
		\\
		\;- \AlgEta{\ell}\AlgChi{k} & \mspace{10mu} \textnormal{ if } \; k > \ell\,.
	\end{cases}
\end{align*}
In order to argue the well-posedness of these infinite sums, we need, first, the following observation about~$\ChiralFock$:
for any state~$\mathbf{v}\in\ChiralFock$, there exists $K\in\Znneg$ such that, for all $k \geq K$, we have
\begin{align*}
\AlgEta{k}\mathbf{v}
\;=\;
\AlgChi{k}\mathbf{v}
\;=\;
0\,.
\end{align*}
Well-definedness then follows since only finitely many terms produce non-zero vectors when acting on a given vector in $\ChiralFock$ with any $\SugL{n}$.

This definition leads to the following known result, referred to as the \term{Sugawara cons\-truction}, and justifies the name \term{Virasoro modes} for the operators~$\SugL{n}$.

\begin{rmk}\label{rmk: chiral sugawara}
	Let $[A,B]\coloneqq AB-BA$ denote the usual commutator of linear operators.
	The Virasoro modes~\eqref{eq: Sugawara} satisfy the Virasoro commutation relations with central charge~$-2$.
	That is, for all $n,m\in\Z$, we have
	\begin{align*}
	[\,\SugL{n},\SugL{m}\,]
	\, =  \;
	(n-m)\, \SugL{n+m}
	+
	\frac{c}{12}(n^3-n)\,\delta_{n+m}\,\id_{\ChiralFock}
	\end{align*}
	with $c=-2$.
	See \cite[Chapter 5.1]{Kac-VA beginners} for its discussion in the context of vertex algebras and \cite[Section 6.2]{Ada-discrete SF} for a detailed computational proof.
	\hfill$\diamond$ 
\end{rmk}

Our first observation regarding the Virasoro structure of $\ChiralFock$ is the non-diagonalizability of the Virasoro mode~$\SugL{0}$, a characteristic feature of the modules that appear in logarithmic CFT.
We record this fact in a remark for later reference.

\begin{rmk}\label{rmk: bosonic ground states}
	The Virasoro mode~$\SugL{0}$ acts on the bosonic ground states as
	\begin{align*}
	\mathllap{
		\SugL{0}\GroundPartner
		\,=\,
		\Ground
		\mspace{50mu}}
	\text{ and }
	\mathrlap{\mspace{50mu}
		\SugL{0}\Ground
		\,=\,
		0\,.}	
	\end{align*}
	It is clear from these that $\SugL{0}$ is non-diagonalizable in $\ChiralFock$.
	The first relation is the reason to call $\GroundPartner$ the \textit{logarithmic partner} of $\Ground$.
	\hfill$\diamond$
\end{rmk}

In spite of not being diagonalizable, the operator~$\SugL{0}$ provides a natural \mbox{$\Znneg$-grading} of $\ChiralFock$ via its decomposition into generalized eigenspaces.
The generalized $\SugL{0}$-eigenvalues are called \term{conformal dimensions}.
The grading can be expressed simply in terms of the basis~\eqref{eq: chiral basis};
a straightforward computation yields
\begin{align*}
\Big[\,
\SugL{0} -
\Delta_{\bk,\bl}\;\id_{\ChiralFock}
\Big]^2
\big(
\AlgEta{-k_r}\cdots\AlgEta{-k_2}\AlgEta{-k_1}
\AlgChi{-\ell_s}\cdots\AlgChi{-\ell_2}\AlgChi{-\ell_1}
\GroundPartner
\big)
\,=\, 0\,,
\end{align*}
where $\Delta_{\bk,\bl}\coloneqq\sum_{i=1}^rk_i+\sum_{j=1}^{s}\ell_j$ is the conformal dimension of the corresponding basis state.
In particular, this implies that $\SugL{0}$ has Jordan blocks of rank at most~$2$.

\begin{rmk}
\label{rmk: ground states annihilated by postive}
By a straightforward computation using the expression of the Virasoro modes~\eqref{eq: Sugawara} one can check that we have
\begin{align*}
	\mspace{170mu}
	\SugL{n}\Ground
	\,=\,
	\SugL{n}\GroundChi \,
	\,=\,
	\SugL{n}\GroundEta
	\,=\,
	\SugL{n}\GroundPartner
	\,=\,
	0
	\mspace{90mu}
	\text{ for }
	n > 0\,.
\end{align*}
In particular, the states~$\Ground$, $\GroundChi$, $\GroundEta$ and $\GroundPartner$ constitute a basis of the subspace with lowest conformal dimension~($0$) in~$\ChiralFock$, which is the reason to call them \emph{ground states}.
\hfill$\diamond$
\end{rmk}

Among the ground states, we have Virasoro \term{primary fields}, that is, $\SugL{0}$-eigenstates that are annihilated by all the positive Virasoro modes.

\begin{ex}
\label{ex: id is primary}
From Remarks~\ref{rmk: bosonic ground states} and \ref{rmk: ground states annihilated by postive}, it follows that the identity field~$\Ground$ is a primary field with conformal dimension~$0$.
\hfill$\diamond$
\end{ex}

\begin{ex}
Using the Sugawara construction~\eqref{eq: Sugawara}, one can check 
\begin{align*}
	\SugL{0}\GroundChi \,
	\,=\,
	\SugL{0}\GroundEta
	\,=\,
	0\,.
\end{align*}
From Remark~\ref{rmk: ground states annihilated by postive}, it then follows the fermionic fields~$\GroundChi$ and $\GroundEta$ are primary fields with conformal dimension~$0$.
\hfill$\diamond$
\end{ex}

However, those are not the only primary fields $\ChiralFock$.

\begin{ex}\label{ex: chiral currents}
	The currents~$\HolCurrentChi$ and $\HolCurrentEta$ can be seen to be Virasoro primary states by a straightforward computation.
	They are related to the ground states via the action of the Virasoro modes by
	\begin{align*}
	\mathllap{
		\SugL{-1}\GroundEta
		\,=\,
		\HolCurrentEta
		\mspace{50mu}}
	\text{ and }
	\mathrlap{\mspace{50mu}
		\SugL{-1}\GroundChi
		\,=\,
		\HolCurrentChi\,.}
	\end{align*}
	It is also straightforward to check that the currents have conformal dimension~$1$.
	In Section~\ref{subsec: L-1}, we will see that these algebraic relations have the interpretation~"$\HolCurrentEta = \partial\GroundEta$" and "$\HolCurrentChi = \partial\GroundChi$" within correlation functions. 
	For this reason, we interpret the states~$\HolCurrentEta$ and $\HolCurrentChi$ as the \textit{currents} of the ground fermions~$\GroundEta$ and $\GroundChi$, respectively.
	\hfill$\diamond$
\end{ex}
In Figure~\ref{fig: ground states - chiral Vir}, we show visually the ground states and the currents, and their relations via Virasoro modes.


\black
\newpage

\begin{figure}[t!]
	\vspace{-10pt}
	\centering
	\begin{overpic}[scale=0.778]{./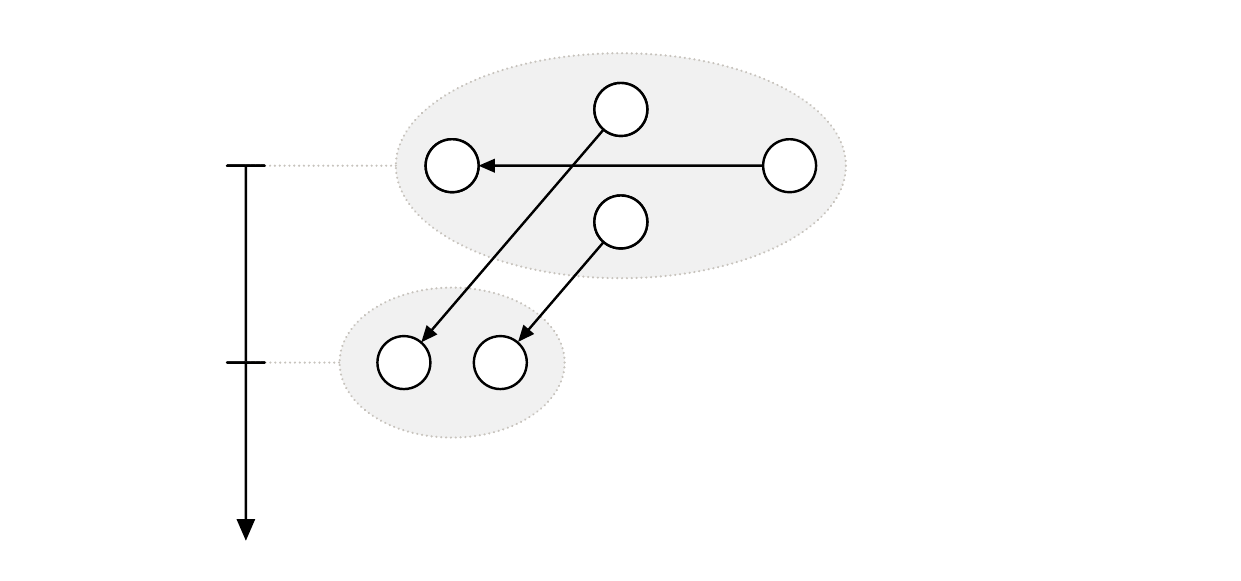}
		\put(49.2,37.15){\large $\GroundChi$}
		\put(49.2,27.8){\large $\GroundEta$}
		\put(35.5,32.35){\large $\Ground$}
		\put(62.65,32.55){\large $\GroundPartner$}
		\put(31.5,17){\large $\HolCurrentChi$}
		\put(39.4,17){\large $\HolCurrentEta$}
		\put(14.5,32.5){$0$}
		\put(14.6,16.7){$1$}
		\put(55.8,34.6){$\SugL{0}$}
		\put(34,24.6){$\SugL{-1}$}
		\put(45.7,20.9){$\SugL{-1}$}
		\put(8,7){\rotatebox{90}{conformal dimension}}
		\begin{tikzpicture}
			\filldraw[white] (0,5) circle (2pt);
			\filldraw[white] (0,-2.6) circle (2pt);
			\node (A) at (3.05,2.7) [] {}; 
			\node (B) at (13.15,2.7) [] {};
			\node (C) at (3.65,-0.05) [] {}; 
			\node (D) at (8,-0.05) [] {};
			\def\myshift#1{\raisebox{-2.5ex}}
			\draw [->,thick,color=white,postaction={decorate,decoration={text along path,text align=center,text={|\myshift|Ground states}}}] (A) to [bend left=45]  (B);
			\def\myshift#1{\raisebox{1ex}}
			\draw [->,thick,color=white, postaction={decorate,decoration={text along path,text align=center,text={Currents}}}]      (C) to [bend right=55] (D);
		\end{tikzpicture}
	\end{overpic}
	\centering
	\caption{
		The ground states, the currents,\\ and their relation via the Virasoro modes.
	}
	\label{fig: ground states - chiral Vir}
\end{figure}

\subsection{Interlude: highest-weight representations at $c=-2$}
\label{subsec: hw reps}

Remark~\ref{rmk: bosonic ground states} in the previous section suggests that the chiral Fock space contains, as a submodule, a staggered module in the sense of Kyt\"ol\"a and Ridout~\cite{KR-staggered}.
That is the content of Proposition~\ref{prop: staggered} below, but before defining staggered modules, we recall the definition of highest-weight modules and their submodule structure at $c=-2$.
To that end, we review here some well-known facts about highest-weight modules that date back to the works of Feigin and Fuchs~\cite{FeiginFuchs_classification}, who built on a conjecture by Kac~\cite{Kac-conjecture} that they had proved~\cite{FeiginFuchs_conjecture}.

The \term{Virasoro algebra} is the infinite dimensional Lie algebra
\begin{align*}
	\Vir
	\coloneqq
	\bigoplus_{n\in\Z} \C\AlgL{n}\; \oplus \; \C\AlgC
\end{align*}
equipped with the Lie bracket
\begin{align*}
	\big[\,\AlgL{n}, \AlgL{m} \,\big]
	\;=\;
	(n-m) \, \AlgL{n+m} \, + \, \frac{n^3-n}{12} \, \delta_{n+m} \, \AlgC
	\mspace{50mu}
	\text{ and }
	\mspace{50mu}
	\big[\,\AlgL{n}, \AlgC \,\big] \;=\; 0\,.
\end{align*}

A Virasoro module~$\mathcal{M}$ is said to be a \term{highest-weight module} if $\AlgC$ acts as a constant multiple of the identity on $\mathcal{M}$, and it has a vector~${\mathbf{h}\in\mathcal{M}}$ that 
\begin{description}[style=sameline]
	\item[(cyclic)] is cyclic, i.e., any other vector in $\mathcal{M}$ can be obtained by the action of $\Vir$ on $\mathbf{h}$,
	\item[($\AlgL{0}$ e.v.)] is an eigenvector of $\AlgL{0}$, and
	\item[(primary)] is annihilated by all positive Virasoro modes, i.e
	\begin{align*}
		\mspace{20mu}
		\AlgL{n}\,\mathbf{h}=0
		\mspace{40mu}
		\text{ for } n>0\,.
	\end{align*}
\end{description}
The unique eigenvalue~$c\in\C$ of $\AlgC$ is called the \term{central charge} of $\mathcal{M}$ and the $\AlgL{0}$-eigen\-value~$h\in\C$ of $\mathbf{h}$ is called the \term{highest weight} of $\mathcal{M}$.
Note that $\AlgL{0}$ acts diagonally on any highest-weight module; its eigenvalues are called \term{conformal dimensions}.

Any highest-weight module is (isomorphic to) a quotient of a \term{Verma module}
\begin{align*}
	\Verma_h^{(c)}
	\coloneqq
	\UniEnve\big( \Vir \big) 
	/
	\big( 
	\AlgL{0} - h , \AlgL{1} , \AlgL{2} , \AlgC - c
	\big)\,,
\end{align*}
where $\UniEnve(\Vir)$ is the universal enveloping algebra of the Virasoro algebra,
$( X_1,\ldots,X_n )$ denotes the left ideal generated by \mbox{$X_1,\ldots,X_n\in\UniEnve(\Vir)$}, and \mbox{$h,c\in\C$} are complex numbers.
Note that, since $\AlgL{1}$ and $\AlgL{2}$ algebraically generate all the positive modes, we have $[\,\AlgL{n}\,]=0$ in the quotient $\Verma_h^{(c)}$ for all $n>0$.
By construction, since $[\,1\,]\in\Verma_h^{(c)}$ is a cyclic vector, all Verma modules are indecomposable.
However, they are not irreducible in general.

The relevant Verma module for the CFT at hand is $\Verma_0^{(-2)}$, which is, in fact, reducible.
Since the central charge is fixed to~$-2$ throughout the present work, we drop the superindex in the notation.
The submodule structure of $\Verma_0$ is commonly called of \emph{chain type}, which we record in the form of a remark for later reference.

\begin{rmk}\label{rmk: submodule structure of Verma0}
For the Verma module $\Verma_0$ of highest weight~$0$ and central charge~$c=-2$, we have the chain of submodules
	\begin{align*}
	\Verma_0
	\supset
	\Verma_1
	\supset
	\Verma_3
	\supset
	\cdots
	\supset\Verma_{k\choose 2}
	\supset\Verma_{{k+1}\choose 2}
	\supset
	\cdots\,.
	\end{align*}
That is, for each integer $k\in\Z_{\geq2}$ greater than $1$, the Verma module~$\Verma_0$ has a unique submodule isomorphic to $\Verma_{h_k}$ that is cyclically generated by a vector with conformal
dimension~${h_k}\coloneqq{k\choose2}$.
In turn, the submodule of highest weight~$h_k$ has a submodule isomorphic to $\Verma_{h_{k+1}}$ that is cyclically generated by a vector with conformal dimension~$h_{k+1}={{k+1}\choose2}$.
Those are, moreover, all the proper submodules of $\Verma_0$.
\hfill$\diamond$
\end{rmk}

Starting from $\Verma_0$, we can define, for $h\in{\Z_{\geq 2}\choose 2}$, the highest-weight representations
\begin{align}\label{eq: quotient hw reps}
	\HiWe^0_{h}
	\; \coloneqq \;
	\Verma_0 / \Verma_h
\end{align}
with highest weight~$0$.
Note that only $\HiWe^0_1$ is irreducible.

For concreteness, we write down the cyclic vectors of the two largest submodules of~$\Verma_0$ in the following remark, which will be necessary in the proof of Proposition~\ref{prop: staggered} below.

\begin{rmk}\label{rmk: singular vectors of Verma0}
Up to a non-zero multiplicative constant, the cyclic vectors of the sub\-modules~\mbox{$\Verma_1\subset\Verma_0$} and \mbox{$\Verma_3\subset\Verma_0$} are, respectively,
\begin{align*}
\AlgL{-1}\,[\,1\,]\,,
\mspace{50mu}
\text{ and }
\mspace{50mu}
\Big((\AlgL{-1})^2 - 2 \AlgL{-2}\Big)\,\AlgL{-1}
\,[\,1\,]\,.
\end{align*}
It takes a simple calculation to check that these two vectors are indeed annihilated by the action of the positive Virasoro modes in $\Verma_0$.	
\hfill$\diamond$
\end{rmk}

\subsection{Staggered (sub)module of $\ChiralFock$}
\label{subsec: staggered}

In this subsection, we introduce staggered mo\-dules following the definition of Kyt\"ol\"a and Ridout.
We, moreover, prove the chiral logarithmic Fock space~$\ChiralFock$ to have a staggered module as a submodule.

A Virasoro module~$\Staggered$ is a \term{staggered module} if $\AlgC$ acts as a constant multiple of the identity on $\Staggered$, and

\begin{description}[style=sameline]
	\item[(indecom.)] it is indecomposable,
	\item[($\AlgL{0}$ non-diag.)] the Virasoro mode~$\AlgL{0}$, as a linear operator $\Staggered\rightarrow\Staggered$, has Jordan blocks of rank 2, and
	\item[(sh.~ex.~seq.)] it satisfies the short exact sequence
	\begin{align*}
		0
		\,\longrightarrow\,
		\HiWe^{\text{L}}
		\,\longrightarrow\,
		\Staggered
		\,\longrightarrow\,
		\HiWe^{\text{R}}
		\,\longrightarrow\,
		0\,,
	\end{align*}
	where $\HiWe^{\text{L}}$ and $\HiWe^{\text{R}}$ are highest-weight modules.
\end{description}
The modules~$\HiWe^{\text{L}}$ and $\HiWe^{\text{R}}$ are called the \term{left module} and \term{right module} of $\Staggered$, respectively.
The unique eigenvalue of $c\in\C$ of $\AlgC$ is called \term{central charge}.
Note the central charges of $\Staggered$, $\HiWe^{\text{L}}$ and $\HiWe^{\text{R}}$ must coincide.

Provided a left and right module, the existence or uniqueness of the corresponding staggered module is not, a priori, guaranteed.
The answer to such question is rather non-trivial and was fully elaborated in~\cite{KR-staggered}.
Let us provide an example therefrom of a {staggered} module that {does} exist and whose short exact sequence implies its uniqueness.
We deli\-be\-rately choose to give as an example the module in~\cite[Example 2]{KR-staggered}, which we then prove to be a submodule of the chiral logarithmic Fock space in Proposition~\ref{prop: staggered} below.

\begin{ex}[{\cite[Example 2]{KR-staggered}}]\label{ex: staggered}
There exists a staggered module~$\Staggered^{\,0}_{1,3}$ with central charge~$c=-2$ and short exact sequence
\begin{align*}
0
\;\xrightarrow{\mspace{50mu}}\;
\HiWe^0_{1}
\;\xrightarrow{\mspace{50mu}}\;
\Staggered^{\,0}_{1,3}
\;\xrightarrow{\mspace{50mu}}\;
\HiWe^0_{3}
\;\xrightarrow{\mspace{50mu}}\;
0\,,
\end{align*}
where $\HiWe^0_{1}$ and $\HiWe^0_{3}$ are as defined in~\eqref{eq: quotient hw reps}.
The staggered module~$\Staggered^{\,0}_{1,3}$ is, moreover, unique up to Virasoro module isomorphism.
\hfill$\diamond$
\end{ex}

We now prove that $\ChiralFock$ contains $\Staggered^{\,0}_{1,3}$ as a Virasoro subrepresentation.
In the following \mbox{---and} for the rest of the \mbox{text---}, let \mbox{$\EnvSugVir\subset\End(\ChiralFock)$} denote the associative algebra generated by the Virasoro modes~$\SugL{n}$.
Then, for any state~$\field\in\ChiralFock$, we let $(\EnvSugVir)\field$ denote the submodule of $\ChiralFock$ cyclically generated by the state $\field$ under the action of the Virasoro modes.

\begin{prop}\label{prop: staggered}
We have the isomorphism
\begin{align*}
	(\EnvSugVir)\GroundPartner
	\;\cong\;
	\Staggered^0_{1,3}
\end{align*}
as Virasoro representations.
\end{prop}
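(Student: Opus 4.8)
The plan is to verify directly that $(\EnvSugVir)\GroundPartner$ meets the three defining properties of a staggered module, with left module $\HiWe^0_{1}$ and right module $\HiWe^0_{3}$, and then to conclude via the uniqueness asserted in Example~\ref{ex: staggered}. First, since $\Ground = \SugL{0}\GroundPartner$ by Remark~\ref{rmk: bosonic ground states}, the submodule $\HiWe^{\mathrm{L}} \coloneqq (\EnvSugVir)\Ground$ is contained in $(\EnvSugVir)\GroundPartner$; by Remark~\ref{rmk: ground states are Vir primary} it is a highest-weight module of highest weight~$0$, hence a quotient of~$\Verma_0$. A direct computation with the Sugawara modes~\eqref{eq: Sugawara} gives $\SugL{-1}\Ground = 0$ (the algebraic form of ``$\partial\Ground = 0$''), so by Remark~\ref{rmk: singular vectors of Verma0} the generator $\AlgL{-1}[1]$ of $\Verma_1\subset\Verma_0$ is killed in $\HiWe^{\mathrm{L}}$; since $\HiWe^0_{1} = \Verma_0/\Verma_1$ is irreducible and $\Ground\neq 0$, we obtain $\HiWe^{\mathrm{L}} \cong \HiWe^0_{1}$.

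Next set $R \coloneqq (\EnvSugVir)\GroundPartner/\HiWe^{\mathrm{L}}$ and let $\bar\GroundPartner$ denote the image of $\GroundPartner$. A short computation shows $\SugL{n}\GroundPartner = 0$ for all $n\geq 1$, while $\SugL{0}\GroundPartner = \Ground\in\HiWe^{\mathrm{L}}$; thus $\bar\GroundPartner$ is a cyclic highest-weight vector of weight~$0$ in $R$, and $R \cong \Verma_0/N$ for some submodule $N\subseteq\Verma_0$. Expanding $\SugL{-1}\GroundPartner$ in the basis~\eqref{eq: chiral basis} shows it is nonzero, and since $\HiWe^{\mathrm{L}}\cong\HiWe^0_{1}$ has no states of conformal dimension~$1$, it follows that $\SugL{-1}\GroundPartner\notin\HiWe^{\mathrm{L}}$, i.e.\ $\SugL{-1}\bar\GroundPartner\neq 0$, so $\Verma_1\not\subseteq N$. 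On the other hand, one checks that the level-$3$ singular vector of Remark~\ref{rmk: singular vectors of Verma0} applied to $\GroundPartner$, namely $\bigl((\SugL{-1})^2 - 2\SugL{-2}\bigr)\SugL{-1}\GroundPartner$, lies in $\HiWe^{\mathrm{L}} = (\EnvSugVir)\Ground$, whence $\Verma_3\subseteq N$. By the chain structure of the submodules of $\Verma_0$ (Remark~\ref{rmk: submodule structure of Verma0}), the only submodule $N$ with $\Verma_3\subseteq N$ but $\Verma_1\not\subseteq N$ is $N = \Verma_3$, so $R\cong\HiWe^0_{3}$.

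The inclusion $\HiWe^{\mathrm{L}}\hookrightarrow(\EnvSugVir)\GroundPartner$ and the quotient map onto $R$ give the short exact sequence $0\to\HiWe^0_{1}\to(\EnvSugVir)\GroundPartner\to\HiWe^0_{3}\to 0$; moreover $\SugL{0}$ has a rank-$2$ Jordan block (Remark~\ref{rmk: bosonic ground states}) and, by the grading discussion following that remark, no larger one. For indecomposability, suppose $(\EnvSugVir)\GroundPartner = A\oplus B$. The two projections commute with $\SugL{0}$, hence preserve its generalized eigenspaces. The generalized $0$-eigenspace of $\SugL{0}$ in $(\EnvSugVir)\GroundPartner$ equals $\spn_\C\{\Ground,\GroundPartner\}$: it is contained in the conformal-dimension-$0$ subspace $\spn_\C\{\Ground,\GroundChi,\GroundEta,\GroundPartner\}$ of $\ChiralFock$, it excludes the fermionic states $\GroundChi,\GroundEta$ because $(\EnvSugVir)\GroundPartner$ is generated by the bosonic vector $\GroundPartner$ under the parity-even operators $\SugL{n}$, and it visibly contains $\Ground$ and $\GroundPartner$. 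On this two-dimensional space $\SugL{0}$ is a single rank-$2$ nilpotent block ($\GroundPartner\mapsto\Ground\mapsto 0$), whose only proper nonzero invariant subspace is $\spn_\C\{\Ground\}$; such a space cannot be written as a direct sum of two nonzero $\SugL{0}$-invariant subspaces, so $\GroundPartner$ lies in $A$ or in $B$, forcing that summand to be all of $(\EnvSugVir)\GroundPartner$. Hence $(\EnvSugVir)\GroundPartner$ is a staggered module with the short exact sequence of $\Staggered^0_{1,3}$, and the uniqueness in Example~\ref{ex: staggered} yields $(\EnvSugVir)\GroundPartner \cong \Staggered^0_{1,3}$. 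The one genuinely laborious step is the verification in the second paragraph that $\bigl((\SugL{-1})^2 - 2\SugL{-2}\bigr)\SugL{-1}\GroundPartner\in(\EnvSugVir)\Ground$; this is a finite but somewhat intricate computation in the Fock space, which one may alternatively replace by computing the conformal-dimension-$3$ subspace of $(\EnvSugVir)\GroundPartner$ and matching it with that of $\Staggered^0_{1,3}$.
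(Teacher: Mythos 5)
Your proof is correct and follows essentially the same route as the paper: verify the three staggered-module axioms with left module $\HiWe^0_{1}$ and right module $\HiWe^0_{3}$ by means of the singular vectors $\SugL{-1}$ and $\big((\SugL{-1})^2-2\SugL{-2}\big)\SugL{-1}$, then invoke the uniqueness from Example~\ref{ex: staggered}. Your indecomposability argument, via the rank-$2$ Jordan block of $\SugL{0}$ on its generalized $0$-eigenspace $\spn_\C\{\Ground,\GroundPartner\}$, is in fact more careful than the paper's, which simply asserts that cyclicity of $\GroundPartner$ gives \textbf{(indecom.)}.
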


\begin{proof}
We argued in Remark~\ref{rmk: bosonic ground states} that $(\EnvSugVir)\GroundPartner$ satisfies \textbf{($\SugL{0}$ non-diag.)}.
From the same remark, since we have $\Ground\in(\EnvSugVir)\GroundPartner$, we get that the module~$(\EnvSugVir)\Ground$ generated by $\Ground$ is a submodule of $(\EnvSugVir)\GroundPartner$.
In particular, we have the short exact sequence
\begin{align*}
	0
	\;\xrightarrow{\mspace{50mu}}\;
	(\EnvSugVir)\Ground
	\;\xrightarrow{\mspace{50mu}}\;
	\;(\EnvSugVir)\GroundPartner\;
	\;\xrightarrow{\mspace{50mu}}\;
	(\EnvSugVir)\GroundPartner/(\EnvSugVir)\Ground
	\;\xrightarrow{\mspace{46mu}}\;
	0\,.
\end{align*}
What is left in order to prove property \textbf{(sh.~ex.~seq.)}~is that the left module~$(\EnvSugVir)\Ground$ and the right module~$(\EnvSugVir)\GroundPartner/(\EnvSugVir)\Ground$ are isomorphic to highest-weight modules. In particular, we prove that they are isomorphic to $\HiWe_{0,1}$ and $\HiWe_{0,3}$ respectively.

\noindent
In the left module, the vector $\Ground\in(\EnvSugVir)\Ground$ satisfies \textbf{(cyclic)}~by construction, \textbf{($\AlgL{0}$ e.v.)}~by
\linebreak[2]
Remark~\ref{rmk: bosonic ground states}, and \textbf{(primary)}~by Remark~\ref{rmk: ground states annihilated by postive}.
The highest weight of $(\EnvSugVir)\Ground$ is~$0$.
A straight\-forward computation using the expression of the Virasoro modes~\eqref{eq: Sugawara} leads to
\begin{align*}
\SugL{-1}\Ground
=
0\,.
\end{align*}
From the submodule structure of $\Verma_0$ ---Remark~\ref{rmk: submodule structure of Verma0}---, it follows that the left module~$(\EnvSugVir)\Ground$ is isomorphic to $\HiWe_{0,1}$, which is irreducible.

\noindent
At this point we can easily see that $(\EnvSugVir)\GroundPartner$ satisfies \textbf{(indecom.)}~too:~by the commutation relations of the Virasoro algebra, the generator $\SugL{0}$ commutes ---up to a multiplicative constant--- with any other Virasoro generator.
Hence, we have $\SugL{0}(\EnvSugVir)\GroundPartner = (\EnvSugVir)\SugL{0}\GroundPartner = (\EnvSugVir)\Ground$.
It follows, then, that any submodule of $(\EnvSugVir)\GroundPartner$ must contain the irreducible $(\EnvSugVir)\Ground$; which implies that $(\EnvSugVir)\GroundPartner$ is indecomposable.

\noindent
As for the right module, we can argue that it is non-trivial since $\SugL{0}$ is diagonal in $(\EnvSugVir)\Ground$ but not in $(\EnvSugVir)\GroundPartner$.
In particular, we can see that $\GroundPartner\notin(\EnvSugVir)\Ground$ since the states~$\Ground$ and $\GroundPartner$ have the same conformal dimension.
It is also straightforward to check that the vector \mbox{$\big[\,\GroundPartner\,\big]\in(\EnvSugVir)\GroundPartner/(\EnvSugVir)\Ground$} satisfies \textbf{(cyclic)}~by construction, \textbf{($\AlgL{0}$ e.v.)}~by Remark~\ref{rmk: bosonic ground states}, and \textbf{(primary)}~by Re\-mark~\ref{rmk: ground states annihilated by postive}.
In view of Remark~\ref{rmk: singular vectors of Verma0}, in order to prove that $(\EnvSugVir)\GroundPartner/(\EnvSugVir)\Ground$ is isomorphic to $\HiWe_{0,3}$ it remains to prove 
\begin{align}\label{eq: right module}
\big[\,\SugL{-1}\GroundPartner\,\big]
\neq
\;0
\mspace{50mu}
\text{ and }
\mspace{50mu}
\big[\,
\big((\SugL{-1})^2 - 2 \SugL{-2}\big)\,\SugL{-1}\,\GroundPartner
\,\big]
=
\;0
\end{align}
in the quotient~$(\EnvSugVir)\GroundPartner/(\EnvSugVir)\Ground$.
As for the first of the two, a simple computation leads to
\begin{align*}
\SugL{-1}\GroundPartner
\, = \,
(\AlgEta{-1} \AlgChi{0} - \AlgChi{-1} \AlgEta{0})\GroundPartner
\,\neq \, 0\,,
\end{align*}
which has conformal dimension~$1$.
There is no vector in $(\EnvSugVir)\Ground$ with conformal dimension~$1$ ---since we have $\SugL{-1}\Ground=0$---, so we can conclude that $\SugL{-1}\GroundPartner\notin(\EnvSugVir)\Ground$,
that is $[\,\SugL{-1}\GroundPartner\,] \neq 0$.
As for the second equation in~\eqref{eq: right module}, a small computation leads to
\begin{align*}
\big((\SugL{-1})^2 - 2 \SugL{-2}\big)
\SugL{-1} 
\,\GroundPartner
\; = \;
-\, \SugL{-3}\Ground\,,
\end{align*}
that is, in the right module we have $[\,((\SugL{-1})^2 - 2 \SugL{-2})\,\SugL{-1}\,\GroundPartner\,] = 0$.
The submodule structure of~$\Verma_0$ ---Remark~\ref{rmk: submodule structure of Verma0}---, implies that the right module is isomorphic to $\HiWe_{0,3}$.

\noindent
By~Example~\ref{ex: staggered}, the exact short sequence of $\Staggered^{0}_{1,3}$ uniquely determines its isomorphism equi\-va\-lence class---we must have $(\EnvSugVir)\GroundPartner\,\cong\,\Staggered^0_{1,3}$.
\end{proof}

\subsection{The full non-chiral representation}
\label{subsec: non-chiral module}
In bulk two-dimensional CFT, the space of fields is a module of two commuting copies of the Virasoro algebra.
In~\cite{GabKau-local}, Gaberdiel and Kausch argued that the nilpotent part of the operator~$\AlgL{0}-\AlgLBar{0}$ should vanish in order for correlation functions to be single-valued (local).
Note that this condition becomes relevant only in logarithmic CFT, where the Virasoro modes~$\AlgL{0}$ and $\AlgLBar{0}$ are non-diagonalizable.
This purely non-chiral condition renders the construction of the full non-chiral module of the symplectic fermions a non-routine exercise; we deem appropriate to provide here the details.
The module we define is the one described in the original construction of the symplectic fermions CFT in the full plane in~\cite{Kausch-SF}.

\black
Similarly as in the chiral setting, we consider the vector space 
\begin{align*}
	\mathsf{V}_{\SymFer}
	\oplus
	\mathsf{V}_{\SymFerBar}
	\coloneqq
	\left( \bigoplus_{k\in\Z} \C\, \AlgEta{k} \right)
	\oplus
	\left( \bigoplus_{k\in\Z} \C\, \AlgChi{k} \right)
	\oplus
	\left( \bigoplus_{k\in\Z} \C\, \AlgEtaBar{k} \right)
	\oplus
	\left( \bigoplus_{k\in\Z} \C\, \AlgChiBar{k} \right)
\end{align*}
and the associative algebra
\begin{align*}
	\FullAssoAlg
	\coloneqq
	\bigoplus_{n\in\Znneg}
	\big(
	\mathsf{V}_{\SymFer}
	\oplus
	\mathsf{V}_{\SymFerBar}
	\big)^{\otimes n}\,.
\end{align*}
The full \term{(non-chiral) symplectic fermions associative algebra} is defined as the quotient
\begin{align*}
	\FullEnvSymFer
	\coloneqq
	\FullAssoAlg
	\;\big/\;
	\mathcal{I}
\end{align*}
where~$\mathcal{I}$ is the two-sided ideal generated by
\begin{align*}
	&
	\big\{ \AlgEta{k} , \AlgChi{\ell} \big\}
	\, - \, k \, \delta_{k+\ell} \,,
	\mspace{41mu}
	\big\{ \AlgEta{k} , \AlgEta{\ell} \big\}\,,
	\mspace{20mu}
	\big\{ \AlgChi{k} , \AlgChi{\ell} \big\}\,,
	&
	\text{(\term{holomorphic} copy)}
	\\
	\phantom{\Bigg\vert}
	&
	\big\{ \AlgEtaBar{k} , \AlgChiBar{\ell} \big\}
	\, - \, k \, \delta_{k+\ell} \,,
	\mspace{41mu}
	\big\{ \AlgEtaBar{k} , \AlgEtaBar{\ell} \big\}\,,
	\mspace{20mu}
	\big\{ \AlgChiBar{k} , \AlgChiBar{\ell} \big\}\,,
	&
	\mspace{60mu}
	\text{(\term{antiholomorphic} copy)}
	\\
	\text{ and }
	\mspace{20mu}
	&
	\big\{ \AlgChi{k} , \AlgChiBar{\ell} \big\}\,,
	\mspace{20mu}
	\big\{ \AlgChi{k} , \AlgEtaBar{\ell} \big\}\,,
	\mspace{20mu}
	\big\{ \AlgEta{k} , \AlgChiBar{\ell} \big\}\,,
	\mspace{20mu}
	\big\{ \AlgEta{k} , \AlgEtaBar{\ell} \big\}\,,
	&
	\text{(anticommuting copies)}
\end{align*}
for all~$k,\ell\in\Z$.

The \term{(non-chiral) logarithmic Fock space} of the  symplectic fermions is the quotient
\begin{align}\label{eq: full fock}
\FullFock
\, \coloneqq \,
\FullEnvSymFer
\, \big/ \,
\big(\, \AlgEta{0} - \AlgEtaBar{0}\,,\; \AlgChi{0} - \AlgChiBar{0}\,,\; \AlgEta{k}\,,\; \AlgChi{k}\,,\; \AlgEtaBar{k}\,,\; \AlgChiBar{k} \,\colon\, k>0  \,\big)
\end{align}
of the associative algebra $\FullEnvSymFer$ by the left ideal generated by $\AlgEta{k}$, $\AlgChi{k}$, $\AlgEtaBar{k}$ and $\AlgChiBar{k}$ with positive indices $k$, and the elements $\AlgEta{0} - \AlgEtaBar{0}$ and $\AlgChi{0} - \AlgChiBar{0}$.
The logarithmic Fock space is canonically a (left) \mbox{$\FullEnvSymFer$-module};
we refer as \term{current modes} to the generators~$\AlgChi{k}$, $\AlgEta{k}$, $\AlgChiBar{k}$ and $\AlgEtaBar{k}$ when they are regarded as linear operators on $\FullFock$.

The \term{ground states} are defined as in Section~\ref{subsec: ground states and currents}:
\begin{align*}
\GroundPartner \coloneqq \big[\, 1 \,\big],
\mspace{40mu}
\Ground \coloneqq \big[\, \AlgChi{0}\AlgEta{0} \,\big],
\mspace{40mu}
\GroundChi \coloneqq -\big[\, \AlgChi{0} \,\big],
\mspace{20mu}
\text{ and }
\mspace{25mu}
\GroundEta \coloneqq -\big[\, \AlgEta{0} \,\big].
\end{align*}
Similarly as in the chiral module, the currents are defined by
\begin{align*}
\mspace{20mu}
\HolCurrentChi
\,\coloneqq\,
\AlgChi{-1}\Ground\,,
\mspace{15mu}\phantom{\text{ and }}\mspace{20mu} & 
\HolCurrentEta
\,\coloneqq\,
\AlgEta{-1}\Ground\,,
\mspace{60mu}
& \phantom{\bigg\vert}(\text{\term{holomorphic currents}})
\\
\AntiHolCurrentChi
\,\coloneqq\,
\AlgChiBar{-1}\Ground\,,
\mspace{15mu}\text{ and }\mspace{20mu} & 
\AntiHolCurrentEta
\,\coloneqq\,
\AlgEtaBar{-1}\Ground\,.
\mspace{30mu}
& \phantom{\bigg\vert}(\text{\term{antiholomorphic currents}})
\end{align*}

\begin{figure}[t!]
	\centering
	\vspace{10pt}
	\begin{overpic}[scale=0.778]{./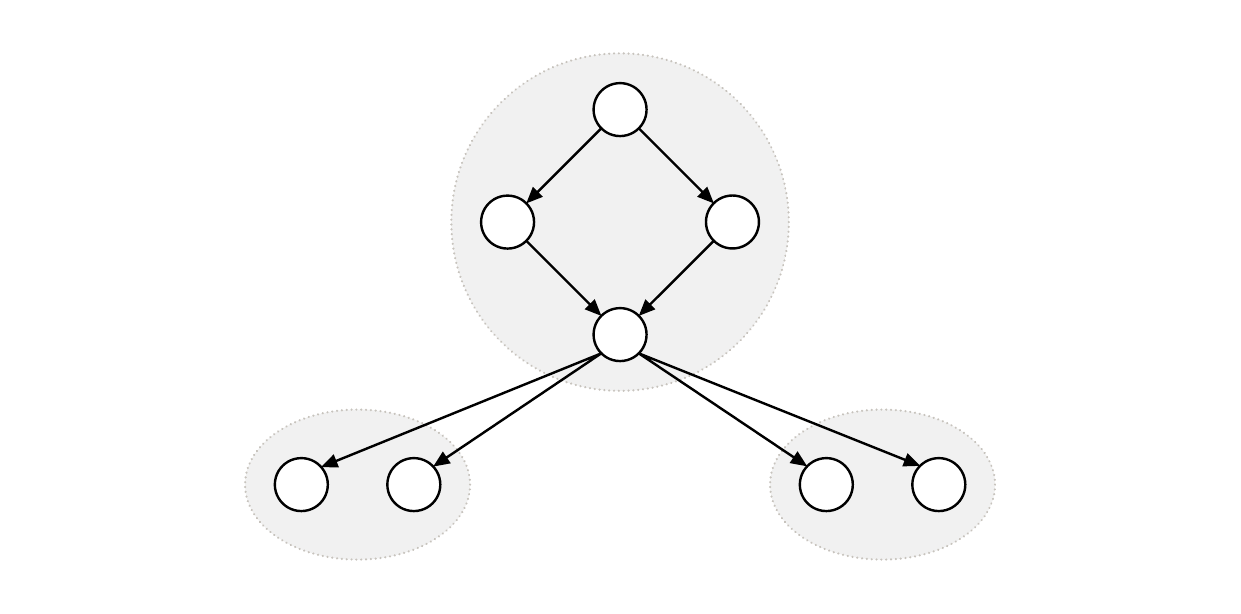}
		\put(49,39.7){\large $\GroundPartner$}
		\put(49.05,21.4){\large $\Ground$}
		\put(40.1,30.5){\large $\GroundEta$}
		\put(58.2,30.8){\large $\GroundChi$}
		\put(23.3,9.8){\large $\HolCurrentEta$}
		\put(32.3,9.8){\large $\HolCurrentChi$}
		\put(65.6,9.8){\large $\AntiHolCurrentEta$}
		\put(74.6,9.8){\large $\AntiHolCurrentChi$}
		\put(55,37.6){$-\AlgChi{0}$}
		\put(41,37.6){$-\AlgEta{0}$}
		\put(56,25.1){$\AlgEta{0}$}
		\put(41,25.1){$-\AlgChi{0}$}
		\put(34,18){$\AlgEta{-1}$}
		\put(41.5,13.6){$\AlgChi{-1}$}
		\put(56,13.6){$\AlgEtaBar{-1}$}
		\put(62,18){$\AlgChiBar{-1}$}
		\begin{tikzpicture}
		\filldraw[white] (0,0) circle (2pt);
		\node (A) at (6.05,5.5) [] {}; 
		\node (B) at (10.15,5.5) [] {};
		\node (C) at (2.3,0) [] {}; 
		\node (D) at (6.9,0) [] {};
		\node (E) at (9.2,0) [] {}; 
		\node (F) at (13.8,0) [] {};
		\def\myshift#1{\raisebox{-2.5ex}}
		\draw [->,thick,color=white,draw opacity = 0, postaction={decorate,decoration={text along path,text align=center,text={|\myshift|Ground states}}}] (A) to [bend left=45]  (B);
		\def\myshift#1{\raisebox{1ex}}
		\draw [->,thick,color=white,draw opacity = 0, postaction={decorate,decoration={text along path,text align=center,text={Holomorphic currents}}}]      (C) to [bend right=55] (D);
		\draw [->,thick,color=red,draw opacity = 0, postaction={decorate,decoration={text along path, text align=center,text={Antiholomorphic currents}}}]      (E) to [bend right=55] (F);
		\end{tikzpicture}
	\end{overpic}
	\centering
	\caption{
		The ground states, the currents, and \\ their relation through the action of the current modes.
	}
	\label{fig: ground states - full symalg}
\end{figure}

The logarithmic Fock space admits the PBW-type basis 
\begin{align}\label{eq: full basis}
\AlgEta{-k_r}\cdots\AlgEta{-k_2}\AlgEta{-k_1}
\AlgChi{-\ell_s}\cdots\AlgChi{-\ell_2}\AlgChi{-\ell_1}
\AlgEtaBar{-\bar{k}_{\bar{r}}}\cdots\AlgEtaBar{-\bar{k}_2}\AlgEtaBar{-\bar{k}_1}
\AlgChiBar{-\bar{\ell}_{\bar{s}}}\cdots\AlgChiBar{-\bar{\ell}_2}\AlgChiBar{-\bar{\ell}_1}
\GroundPartner
\end{align}
\vspace{-25pt}
\begin{align}
\nonumber
\text{ with } \qquad
& \phantom{\Big\vert}r,s,\bar{r},\bar{s}\in\Znneg \\
\nonumber
\text{ and the orderings } \qquad 
& \phantom{\Big\vert}{0 \leq k_1 < k_2 < \cdots < k_r}\,,\\
\nonumber
& \phantom{\Big\vert}{0 \leq \ell_1 < \ell_2 < \cdots < \ell_s}\,,\\
\nonumber
& \phantom{\Big\vert}{0 < \bar{k}_1 < \bar{k}_2 < \cdots < \bar{k}_{\bar{r}}}\,, \text{ and }\\
\nonumber
& \phantom{\Big\vert}{0 < \bar{\ell}_1 < \bar{\ell}_2 < \cdots < \bar{\ell}_{\bar{s}}}\,.
\end{align}

Similarly as in the chiral setting, the non-chiral logarithmic Fock space $\FullFock$ admits a natural $\Z_2$-grading in terms of the above basis that we call \term{parity}.
We set the parity to be $\boldsymbol{+}$~(plus) for the \term{bosonic states}
\begin{align}\label{eq: parity bos}
\BosFullFock
\coloneqq
\spn_\C
\Big\{\,
\AlgEta{-k_r}\cdots
\AlgChi{-\ell_s}\cdots
\AlgEtaBar{-\bar{k}_{\bar{r}}}\cdots
\AlgChiBar{-\bar{\ell}_{\bar{s}}}\cdots
\big[\, 1 \,\big]
\,\Big\vert\,
s + r + \bar{s} + \bar{r} = 0 \textnormal{ mod } 2
\,\Big\}\,,
\end{align}
and $\boldsymbol{-}$~(minus) for the \term{fermionic states}
\begin{align}\label{eq: parity fer}
\FerFullFock\,
\coloneqq
\spn_\C
\Big\{\,
\AlgEta{-k_r}\cdots
\AlgChi{-\ell_s}\cdots
\AlgEtaBar{-\bar{k}_{\bar{r}}}\cdots
\AlgChiBar{-\bar{\ell}_{\bar{s}}}\cdots
\big[\, 1 \,\big]
\,\Big\vert\,
s + r + \bar{s} + \bar{r} = 1 \textnormal{ mod } 2
\,\Big\}\,.
\end{align}

In Figure~\ref{fig: ground states - full symalg}, we show visually the ground states and the currents, and their relations via the action of the current modes.

The Sugawara construction is realized through the formulae
\begin{align*}
\SugL{n}
\coloneqq
\sum_{k \,\geq\, n/2}
\AlgChi{n-k}\AlgEta{k}
-
\sum_{k \,<\, n/2}
\AlgEta{k}\AlgChi{n-k}
\mspace{30mu}
\text{ and }
\mspace{30mu}
\SugLBar{n}
\coloneqq
\sum_{k \,\geq\, n/2}
\AlgChiBar{n-k}\AlgEtaBar{k}
-
\sum_{k \,<\, n/2}
\AlgEtaBar{k}\AlgChiBar{n-k}\,,
\end{align*}
for $n\in\Z$,
which give rise to well-defined linear operators \mbox{$\FullFock\rightarrow\FullFock$}.
The \term{holomorphic Virasoro modes}~$\SugL{n}$ and the \term{antiholomorphic Virasoro modes}~$\SugLBar{n}$ satisfy the Virasoro commutation relations with central charge~$-2$.
It is straightforward to see that the two copies of Virasoro modes commute with each other, \mbox{i.e.}, we have
\begin{align*}
\mathllap{\big[\, \SugL{n} , \SugLBar{m} \,\big]}
\; = \;
\mathrlap{0
	\mspace{90mu}
	\text{ for all } n,m\in\Z\,.}
\end{align*}

The nilpotent parts of the Virasoro modes~$\SugL{0}$ and $\SugLBar{0}$ are, respectively, $\SugL{0}^{(\text{nil})}=\AlgChi{0}\AlgEta{0}$ and $\SugLBar{0}^{(\text{nil})}=\AlgChiBar{0}\AlgEtaBar{0}$.
Then, the "extra" relations $\AlgChi{0}=\AlgChiBar{0}$ and $\AlgEta{0}=\AlgEtaBar{0}$ of the current modes ---that is, as linear operators on $\FullFock$--- in the quotient~\eqref{eq: full fock} conspicuously yield the Gaberdiel--Kausch condition~\mbox{$\SugL{0}^{(\text{nil})}-\SugLBar{0}^{(\text{nil})}=0$}.

\begin{figure}[h!]
	\centering
	\vspace{10pt}
	\begin{overpic}[scale=0.778]{./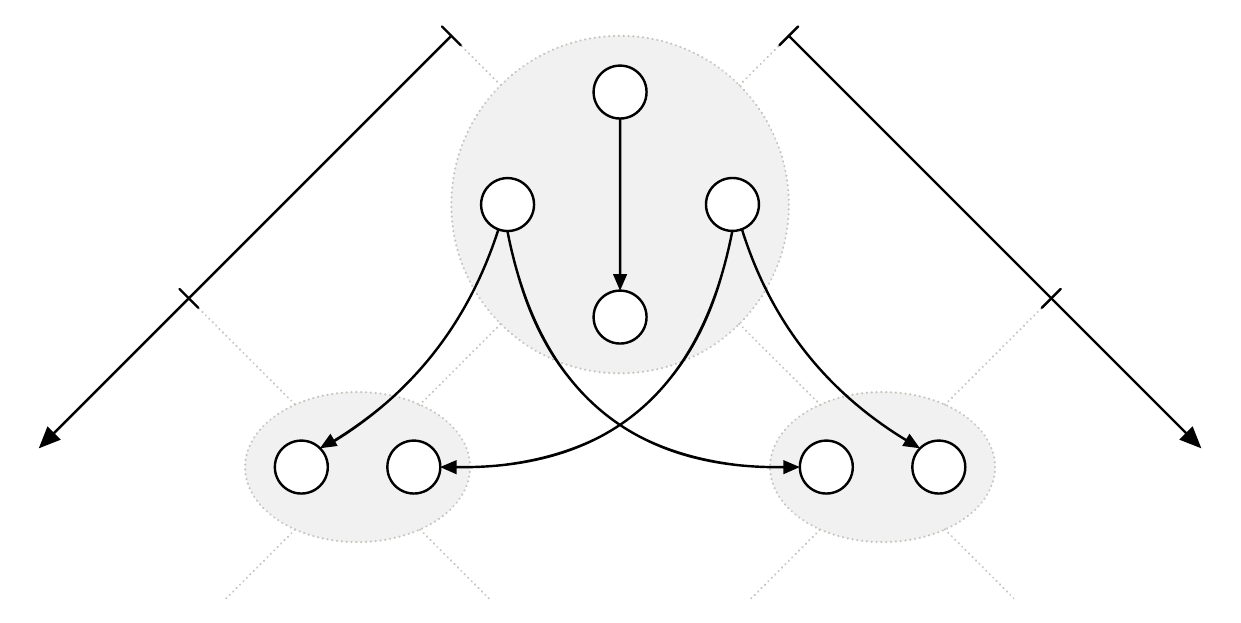}
		\put(49,41.85){\large $\GroundPartner$}
		\put(49.05,23.55){\large $\Ground$}
		\put(40.1,32.6){\large $\GroundEta$}
		\put(58.2,32.9){\large $\GroundChi$}
		\put(23.3,11.9){\large $\HolCurrentEta$}
		\put(32.3,11.9){\large $\HolCurrentChi$}
		\put(65.6,11.9){\large $\AntiHolCurrentEta$}
		\put(74.6,11.9){\large $\AntiHolCurrentChi$}
		\put(66,49.2){$0$}
		\put(33,49.2){$0$}
		\put(12,27.8){$1$}
		\put(87,27.8){$1$}
		\put(51,34.5){$\SugLBar{0}$}
		\put(46.8,34.5){$\SugL{0}$}
		\put(29.7,21.8){$\SugL{-1}$}
		\put(66.3,21.8){$\SugLBar{-1}$}
		\put(42,9.8){$\SugL{-1}$}
		\put(55,9.8){$\SugLBar{-1}$}
		\put(5,33.5){\rotatebox{45}{holomorphic}}
		\put(4,27.5){\rotatebox{45}{conformal dimension}}
		\put(82.5,45){\rotatebox{-45}{antiholomorphic}}
		\put(78.5,44){\rotatebox{-45}{conformal dimension}}
		\begin{tikzpicture}
		\filldraw[white] (0,-1) circle (2pt);
		\filldraw[white, opacity = 0.8] (3.65,-0.9) circle (5.9pt);
		\filldraw[white, opacity = 0.8] (5.63,-0.88) circle (5.7pt);
		\filldraw[white, opacity = 0.8] (10.57,-0.9) circle (5.8pt);
		\filldraw[white, opacity = 0.8] (10.48,-0.93) circle (5.8pt);
		\filldraw[white, opacity = 0.8] (12.6,-0.9) circle (5.6pt);
		\node (A) at (6.05,5.5) [] {}; 
		\node (B) at (10.15,5.5) [] {};
		\node (C) at (2.35,0) [] {}; 
		\node (D) at (6.95,0) [] {};
		\node (CC) at (2.35,-0.33) [] {}; 
		\node (DD) at (6.95,-0.33) [] {};
		\node (E) at (9.25,0) [] {}; 
		\node (F) at (13.85,0) [] {};
		\node (EE) at (9.25,-0.33) [] {}; 
		\node (FF) at (13.85,-0.33) [] {};
		\def\myshift#1{\raisebox{-2.5ex}}
		\draw [->,thick,color=white,draw opacity = 0, postaction={decorate,decoration={text along path,text align=center,text={|\myshift|Ground states}}}] (A) to [bend left=45]  (B);
		\def\myshift#1{\raisebox{1ex}}
		\draw [->,thick,color=white,draw opacity = 0, postaction={decorate,decoration={text along path,text align=center,text={Holomorphic}}}]      (C) to [bend right=55] (D);
		\draw [->,thick,color=white,draw opacity = 0, postaction={decorate,decoration={text along path,text align=center,text={currents.}}}]      (CC) to [bend right=55] (DD);
		\draw [->,thick,color=white,draw opacity = 0, postaction={decorate,decoration={text along path,text align=center,text={Antiholomorphic}}}]      (E) to [bend right=55] (F);
		\draw [->,thick,color=white,draw opacity = 0, postaction={decorate,decoration={text along path,text align=center,text={currents}}}]      (EE) to [bend right=55] (FF);
		\filldraw[white, opacity = 1] (5.37,-1.43) circle (1pt);
		\end{tikzpicture}
	\end{overpic}
	\centering
	\caption{
		The ground states, the currents, and \\ their relation through the action of the Virasoro modes.
	}
	\label{fig: ground states - full vir}
\end{figure}

The operators~$\SugL{0}$ and $\SugLBar{0}$ provide a natural \mbox{$\Znneg$-bigrading} of $\FullFock$ via its decomposition into generalized eigenspaces.
The generalized $\SugL{0}$ and $\SugLBar{0}$-eigenvalues are called, respectively, \term{holomorphic} and \term{antiholomorphic conformal dimensions}.
The bigrading can be ex\-pres\-sed simply in terms of the basis~\eqref{eq: full basis}:
a straightforward computation leads to
\begin{align*}
\Big[\,
\SugL{0} -
\Delta_{\bk,\bl}\;\id_{\FullFock}
\Big]^2
\big(
\AlgEta{-k_r}\cdots\AlgEta{-k_1}
\AlgChi{-\ell_s}\cdots\AlgChi{-\ell_1}
\AlgEtaBar{-\bar{k}_{\bar{r}}}\cdots\AlgEtaBar{-\bar{k}_1}
\AlgChiBar{-\bar{\ell}_{\bar{s}}}\cdots\AlgChiBar{-\bar{\ell}_1}
\GroundPartner
\big)
\,=\, 0\,,
\end{align*}
and
\begin{align*}
\Big[\,
\SugLBar{0} -
\bar{\Delta}_{\bar{\bk},\bar{\bl}}\;\id_{\FullFock}
\Big]^2
\big(
\AlgEta{-k_r}\cdots\AlgEta{-k_1}
\AlgChi{-\ell_s}\cdots\AlgChi{-\ell_1}
\AlgEtaBar{-\bar{k}_{\bar{r}}}\cdots\AlgEtaBar{-\bar{k}_1}
\AlgChiBar{-\bar{\ell}_{\bar{s}}}\cdots\AlgChiBar{-\bar{\ell}_1}
\GroundPartner
\big)
\,=\, 0\,,
\end{align*}
where $\Delta_{\bk,\bl}\coloneqq\sum_{i=1}^rk_i+\sum_{j=1}^{s}\ell_j$
and
$\bar{\Delta}_{\bar{\bk},\bar{\bl}}\coloneqq\sum_{i=1}^{\bar{r}}\bar{k}_i+\sum_{j=1}^{\bar{s}}\bar{\ell}_j$ are the conformal dimensions of the basis state.
In particular, this implies that the Virasoro modes~$\SugL{0}$ and $\SugLBar{0}$ have Jordan blocks of rank at most~$2$.

Similarly as in the chiral setting, the currents are related to the fermions via the action of the Virasoro modes.
We record this fact in a remark for later reference.

\begin{rmk}\label{rmk: currents derivatives of fermions}
Similarly as in the chiral setting ---Example~\ref{ex: chiral currents}---, we have
	\begin{align*}
	\HolCurrentChi =  \SugL{-1} \GroundChi
	\,,
	\mspace{40mu}
	\HolCurrentEta = \SugL{-1} \GroundEta
	\,,
	\mspace{40mu}
	\AntiHolCurrentChi =  \SugLBar{-1} \GroundChi
	\,,
	\mspace{35mu}
	\text{ and }
	\mspace{35mu}
	\AntiHolCurrentEta =  \SugLBar{-1} \GroundEta\,.
	\end{align*}
	Within correlation functions, the above relations have the interpretation of the (anti)holo\-morphic currents being the (anti)holo\-morphic Wirtinger derivative of the corresponding fermion.
	This justifies the name \emph{currents} for these states.
	\hfill$\diamond$
\end{rmk}

In Figure~\ref{fig: ground states - full vir}, we show visually the ground states and the currents, and their relations via the action of the Virasoro modes.

\subsection{A glimpse of correlation functions}
\label{subsec: glimpse corr fun}

In Section~\ref{sec: corrfun} below, we will construct the correlation functions of the symplectic fermions CFT in domains of the complex plane.
In this section, we give some remarks that will guide us in such construction.

\begin{rmk}\label{rmk: automorphisms}
For each complex constant~$\thectt\in\C$, there exists an $\FullEnvSymFer$-module auto\-morphism, which
---by the $\FullEnvSymFer$-cyclicity of $\GroundPartner$ in $\FullFock$---
is fully determined by
\begin{align*}
	\GroundPartner \longmapsto \GroundPartner + \thectt\,\Ground\,,
\end{align*}
that is clearly not trivial for $\alpha \neq 0$.
These maps automatically lift to $(\Vir\oplus\VirBar)$-module automorphisms.
Note that they act as the identity map when restricted to the submo\-dules~$(\FullEnvSymFer)\GroundEta$, $(\FullEnvSymFer)\GroundChi$ and $(\FullEnvSymFer)\Ground$.
These automorphisms can be related to rescalings of the coordinates within correlation functions --- see Example~\ref{ex: rescaling} below.

\noindent
Physically, this means that a state cannot be distinguished from its image under one of these automorphisms.  
We should \emph{not} expected, then, the correlation functions of a CFT built from the representation $\FullFock$ to be canonically unique.
Rather, to determine the correlation functions of the theory we need to fix one extra parameter that can take as many values as there are non-equivalent automorphisms of $\FullFock$.
\hfill$\diamond$
\end{rmk}

Fix $\thectt\in\C$ and consider $n\in\Zpos$ fields ${\field_1,\ldots,\field_n\in\FullFock}$.
Their correlation function in a domain $\domain\subset\C$ is a $\C$-valued real-analytic multivariable function
\begin{align*}
(z_1,\ldots,z_n)
\longmapsto
\BigCorrFun{\domain}{\thectt}{\field_1(z_1)\cdots \field_n(z_n)}
\end{align*}
of distinct points $z_1,\ldots,z_n\in\domain$.
The collection of correlation functions of the theory is determined by the algebraic structure of the space~$\FullFock$.
In particular, a fundamental dictate of CFT says that the action of $\AlgL{-1}$ and $\AlgLBar{-1}$ translate, respectively, into holomorphic and antiholomorphic Wirtinger derivatives inside correlation functions. That is, we have
\begin{align*}
\BigCorrFun{\domain}{\thectt}{ \big[\SugL{-1}\field\big] (z)\cdots}
=
\partial_{z}\BigCorrFun{\domain}{\thectt}{ \field(z)\cdots}
\mspace{15mu}
\text{ and }
\mspace{25mu}
\BigCorrFun{\domain}{\thectt}{ \big[\mspace{1mu}\SugLBar{-1}\field\big] (z)\cdots}
=
\overline\partial_{z}\BigCorrFun{\domain}{\thectt}{ \field(z)\cdots}\,,
\end{align*}
for any field $\field\in\FullFock$.

Note, then, that the algebraic relations~$\SugL{-1}\field = 0$ and $\SugLBar{-1}\field = 0$ translate into the correlations functions of $\field\in\FullFock$ being antiholomorphic and holomorphic, respectively.

\begin{rmk}\label{rmk: identity field}
The ground state $\Ground$ satisfies both
\begin{align*}
	\SugL{-1} \Ground = 0
	\mspace{65mu}
	\text{ and }
	\mspace{65mu}
	\SugLBar{-1} \Ground = 0\,.
\end{align*} 
This implies that its correlation functions do not depend on the insertion point $z\in\domain$.
In Example~\ref{ex: identity}, we will determine the precise action of $\Ground$ within correlation functions.
\hfill$\diamond$
\end{rmk}

\begin{rmk}
	The currents $\HolCurrentChi$ and $\HolCurrentEta$ satisfy
	\begin{align*}
	\SugLBar{-1} \HolCurrentChi 
	\; = \;
	\SugLBar{-1} \HolCurrentEta 
	\; = \;
	0\,,
	\end{align*} 
	which justifies their name \emph{holomorphic} currents.
	Similarly, the \emph{antiholomorphic} currents $\AntiHolCurrentChi$ and $\AntiHolCurrentEta$ satisfy $\SugL{-1}\AntiHolCurrentChi = \SugL{-1}\AntiHolCurrentEta = 0$.
	\hfill$\diamond$
\end{rmk}

Lastly, the combined action of $\SugL{-1}$ and $\SugLBar{-1}$ can be interpreted in terms of the laplacian operator $\Delta\coloneqq \partial_x^2+\partial_y^2$, with $x = \Re(z)$ and $y = \Im(z)$, as a consequence of its factorization $\Delta=4\partial\overline\partial=4\overline\partial\partial$ in terms of the holomorphic and antiholomorphic Wirtinger derivatives.

\begin{rmk}\label{rmk: fermions harmonic}
	The ground fermions satisfy
	\begin{align*}
		\SugL{-1}\,\SugLBar{-1} \GroundChi \; = \; 0
		\mspace{70mu} 
		\text{ and }
		\mspace{70mu}
		\SugL{-1}\,\SugLBar{-1} \GroundEta \; = \; 0\,.
	\end{align*}
	These algebraic identities translate to the correlation functions of the ground fermions being harmonic with respect to its insertion points.
	\hfill$\diamond$
\end{rmk}

\newpage
\section{Correlation functions}%
\label{sec: corrfun}
In this section, we define the correlation functions for the symplectic fermions CFT in domains of the complex plane.
In particular, we construct correlation functions on (non-empty) proper simply-connected open subsets of the complex plane.
For the remainder of the text, any domain~$\domain\subset\C$ is taken to be of such kind.

Correlation functions are the physically relevant quantities in a field theory.
In the context of statistical mechanics, one can think of CFT correlation functions as the scaling limit of the ---appropriately renormalised--- correlation functions of a lattice model.
However, this picture is not so simple in logarithmic CFT.
The presence of logarithmic fields ---whose correlation functions feature logarithmic dependencies--- render the correlation functions of the theory not exactly physical.
The origin of this can be seen more transparently in algebraic terms:
there exists a family of non-trivial Virasoro automorphisms of the space of fields.

In Remark~\ref{rmk: automorphisms}, we encountered a family of non-trivial automorphisms of $\FullFock$ indexed by a constant $\thectt\in\C$.
Essentially, this means that we cannot physically distinguish a field and its image under these automorphisms.
We also noted in Remark~\ref{rmk: automorphisms}, that these automorphisms act as the identity on the submodules~$(\FullEnvSymFer)\GroundEta$, $(\FullEnvSymFer)\GroundChi$ and $(\FullEnvSymFer)\Ground$.
We therefore expect the correlation functions of the fields in these submodules to be canonically defined.
However, to completely determine the correlation functions of the logarithmic fields ---that is, those that are not invariant under the automorphisms of $\FullFock$--- we need to specify an extra piece of data; in particular, we need to specify a constant $\thectt\in\FullFock$.

Fixing $\thectt\in\C$, the \term{correlation functions} of the symplectic fermions are a collection, indexed by $n\in\Zpos$, of linear maps
\begin{align*}
	\bigCorrFun{\domain}{\thectt}{\cdots}
	\; \colon \;
	\FullFock^{\,\otimes n} \longrightarrow \AnalyFun{\Conf{n}{\domain}}{\C}\,,
\end{align*}
where $\AnalyFun{\Conf{n}{\domain}}{\C}$ is the set of $\C$-valued real-analytic functions on the \term{configuration space}
\begin{align*}
	\Conf{n}{\domain}
	\coloneqq
	\Big\{
	(z_1,\ldots,z_n)\in\domain^n \,\big\vert\, z_i \neq z_j \text{ for } i\neq j
	\Big\}\,
\end{align*}
of the domain~$\domain$.
In other words, for $n\geq1$ fixed fields~$\field_1,\ldots,\field_n\in\FullFock$, we have a function
\begin{align*}
	(z_1,\ldots,z_n)
	\longmapsto
	\BigCorrFun{\domain}{\thectt}{\field_1(z_1)\cdots\field_n(z_n)}\,
\end{align*}
which is real-analytic on the variables $z_i\in\domain$ away from one another.

In Theorem~\ref{thm: characterization of CF}, we provide the characterization of the correlation functions of the symplectic fermions.
To do so, we follow two basic principles of CFT.
The first one, explored more in detail in Section~\ref{subsec: L-1} below, states what the action of the Virasoro modes~$\SugL{-1}$ and $\SugLBar{-1}$ translates into within correlation functions.
The second one involves the existence of expansions of the type
\begin{align*}
	\BigCorrFun{\domain}{\thectt}{\field_1(z_1)\field_2(z_2)\cdots}
	\;=\;
	\sum_{i} f_{i}(z_1-z_2)\BigCorrFun{\domain}{\thectt}{\field_i(z_2)\cdots}
\end{align*}
that are convergent for $\vert z_1 - z_2\vert$ small enough.
These are known as operator product ex\-pan\-sions (OPEs); we describe them in more detail in Section~\ref{subsec: OPE}.
\black

\subsection{Conformal covariance}
\label{subsec: L-1}

One of the axioms of CFT that connects the conformal symmetries of correlation functions with the algebraic structure of the fields is the following:
for any field $\field\in\FullFock$, we have
\begin{align}
\mspace{20mu}
&
\phantom{\Bigg\vert}
\BigCorrFun{\domain}{\thectt}{ \big[\SugL{-1}\field\big] (z)\cdots}
\;=\;
\partial_{z}\BigCorrFun{\domain}{\thectt}{ \field(z)\cdots}\,,
\mspace{5mu}
\text{ and }
\label{eq: L-1}
\\
\label{eq: L-1 bar}
&
\phantom{\bigg\vert}
\BigCorrFun{\domain}{\thectt}{ \big[\mspace{1mu}\SugLBar{-1}\field\big] (z)\cdots}
\;=\;
\overline\partial_{z}\BigCorrFun{\domain}{\thectt}{ \field(z)\cdots}\,,
\end{align}
where $\partial$ and $\overline\partial$ are the holomorphic and antiholomorphic Wirtinger derivatives, respectively.

In Remark~\ref{rmk: fermions harmonic}, we anticipated that the identities
\begin{align}\label{eq: fermions harmonic alg}
\SugL{-1}\,\SugLBar{-1} \GroundChi \; = \; 0
\mspace{70mu} 
\text{ and }
\mspace{70mu}
\SugL{-1}\,\SugLBar{-1} \GroundEta \; = \; 0\,,
\end{align}
would guide our construction of the correlation functions of the theory.
Consider the $2$-point function
\begin{align}\label{eq: 2pt funct}
	\BigCorrFun{\domain}{\thectt}{\GroundChi(z)\,\GroundEta(w)}
\end{align}
of the ground fermions, and recall that the laplacian operator~$\Delta\coloneqq \partial_x^2+\partial_y^2$ can be factorized as ${\Delta=4\partial\overline\partial=4\overline\partial\partial}$.
Then, from~\eqref{eq: L-1} and \eqref{eq: L-1 bar}, we have that the identities~\eqref{eq: fermions harmonic alg} imply that the function
\begin{align*}
(z,w)
\longmapsto
\BigCorrFun{\domain}{\thectt}{\GroundChi(z)\,\GroundEta(w)}\,,
\end{align*}
is harmonic in each of its variables separately in $\Conf{2}{\domain}$.
Its singular behaviour when $\vert z - w \vert \to 0$ is determined to be proportional to $\log \vert z - w \vert$ by the conformal covariance of the theory (\cite{Kausch-SF}).
The exact value of the two-point function, however, depends on the boundary values we choose on the domain~$\domain$.
Consistently with this, in Theorem~\ref{thm: characterization of CF}, we fix it to be proportional to the Green's function of the laplacian with Dirichlet boundary conditions.

For concreteness, the laplacian Green's function with Dirichlet boundary conditions on a domain~$\domain$ is the real-analytic function~$\Green_\domain$ on $\Conf{2}{\domain}$ that satisfies
\begin{align}\label{eq: Green functions}
& \mspace{8mu}
\Green_\domain(z,w) = \frac{1}{2\pi}\log\frac{1}{\vert z-w\vert} + \HarmOfGreen_\domain(z,w)\,,
\mspace{30mu}
\text{ and }
\\
\nonumber
& \lim_{z\to\bdry\domain} \Green_\domain(z,w)  = 0\,, \phantom{\Bigg\vert}
\end{align}
with $\HarmOfGreen_\domain \colon \domain\times\domain \longrightarrow \R$ a harmonic function on both variables separately.
In particular, the function $ z \mapsto \HarmOfGreen_{\domain} (z,w)$ on $\domain$ is the (unique) harmonic extention of the boundary values $x \mapsto \frac{1}{2\pi}\log \vert x - w \vert$.
Note too that $\Green_\domain$ is harmonic in both variables separately on its domain of definition~$\Conf{2}{\domain}$.

Besides the harmonicity of the ground fermions, we also anticipated in Remark~\ref{rmk: identity field} the identities~$\SugL{-1}\Ground = 0$ and $\SugLBar{-1}\Ground = 0$.
Under the light of \eqref{eq: L-1} and \eqref{eq: L-1 bar}, they imply that the correlation function
\begin{align*}
	\BigCorrFun{\domain}{\thectt}{\Ground(z)\cdots}
\end{align*}
does not depend on the insertion point~$z\in\domain$.
In Theorem~\ref{thm: characterization of CF} below, we will see that the field~$\Ground$ acts as the identity field within correlation functions, i.e. $\bigCorrFun{\domain}{\thectt}{\Ground(z)\cdots} = \bigCorrFun{\domain}{\thectt}{\cdots}$.

\subsection{Operator product expansions}
\label{subsec: OPE}
In the bootstrap approach to field \mbox{theory}, one circumvents the lagrangian formalism and computes the correlation functions of the theory by exploiting certain properties thereof.
A celebrated feature of conformal field theories that make it possible to bootstrap them is the existence of an \term{operator product expansion} (\term{OPE}); which in the present case is defined as follows.
For any two fields~${\field_1,\field_2\in\FullFock}$, there exists a countable collection~$\{f_i,\field_i\}_{i\in I}$ of $\C$-valued functions~$f_i$ on $\C\setminus\{0\}$, and fields $\field_i\in\FullFock$, such that
\begin{align}
\nonumber
\text{for } \qquad
& \text{all domains~$\domain \subset \C$}\,, \text{ and }\phantom{\Big\vert} \\
\nonumber
& \text{any  fields~} \field_3 , \ldots, \field_n \in \FullFock\,,
\phantom{\Big\vert}  \\
\nonumber
\text{we have } \qquad 
& \BigCorrFun{\domain}{\thectt}{\field_1(z_1)\field_2(z_2)\cdots\field_n(z_n)}
\phantom{\Bigg\vert}=\phantom{\Bigg\vert}
\sum_{i\in I} f_i(z_1-z_2)\BigCorrFun{\domain}{\thectt}{\field_i(z_2)\cdots\field_n(z_n)}
\\
\nonumber
\text{ whenever } \qquad
&  0 < \vert z_1 - z_2 \vert < \min \big\{ \!\textstyle\min_{j\geq 3} \vert z_2 - z_j\vert\,,\,\dist(z_2,\bdry\domain)\big\}\,,\phantom{\bigg\vert}
\end{align}
where $\dist(z_2,\bdry\domain)$ denotes the distance from the point $z_2$ to the boundary of $\domain$.

In such case, we then just write
\begin{align*}
\field_1(z_1)\field_2(z_2)
\,=\,
\;\sum_{i\in I} f_i(z_1-z_2) \, \field_i(z_2)\,.
\end{align*}

Naturally, since the correlation functions of the symplectic fermions are canonically defined only up to the field transformations in Remark~\ref{rmk: automorphisms},
only fields modulo automorphism appear in the OPEs of the theory.
See the statement of Theorem~\ref{thm: characterization of CF} below for an example of this fact.

In the previous section, we stressed that, in algebraic terms, 
the space of fields~$\FullFock$ is said to be logarithmic because
the Virasoro modes~$\SugL{0}$ and $\SugLBar{0}$ are non-diagonalizable thereon.
In terms of OPEs, the logarithmic nature of the theory can be seen more transparently.
Some of the expansions feature logarithms
---see Theorem~\ref{thm: characterization of CF}---,
in contrast with non-logarithmic CFTs, in which the OPE coefficients are of the type $(z_1-z_2)^{\alpha} {(\overline{z}_1-\overline{z}_2)^{\beta}}$ for $\alpha,\beta \in \R$.

Commonly in axiomatic descriptions of CFT, it is taken as an axiom that there exists a pair of fields ---called the holomorphic and antiholomorphic stress--energy tensors--- whose OPEs with other fields involve the action of the Virasoro modes.
However, in the CFT at hand, the Virasoro structure is a consequence of the symplectic fermions structure.
We, hence, formulate this second axiom in terms of the symmetry algebra.
In particular, in our characterization of the correlation functions
---Theorem~\ref{thm: characterization of CF}---,
it will be enough to require that the currents have the OPEs
\begin{align*}
	\HolCurrentChi(z)\field(w)
	\,=\,
	\sum_{k\in\Z}\,
	\frac{\,\big[\AlgChi{k}\field\big](w)\,}{\,(z-w)^{k+1}}\,,
	\mspace{130mu}
	\HolCurrentEta(z)\field(w)
	\,=\,
	\sum_{k\in\Z}\,
	\frac{\,\big[\AlgEta{k}\field\big](w)\,}{\,(z-w)^{k+1}}\,,
	\\
	\phantom{.}
	\\
	\AntiHolCurrentChi(z)\field(w)
	\,=\,
	\sum_{k\in\Z}\,
	\frac{\,\big[\AlgChiBar{k}\field\big](w)\,}{\,(\overline{z}-\overline{w})^{k+1}}\,,
	\mspace{40mu}
	\text{ and }
	\mspace{50mu}
	\AntiHolCurrentEta(z)\field(w)
	\,=\,
	\sum_{k\in\Z}\,
	\frac{\,\big[\AlgEtaBar{k}\field\big](w)\,}{\,(\overline{z}-\overline{w})^{k+1}}\,,
\end{align*}
with any other field~$\field\in\FullFock$.

It is sometimes more useful not to state all the terms in an OPE but rather the non-vanishing terms as $z_1\to z_2$.
For this reason, we introduce the notation 
\begin{align*}
\phantom{\bigg\vert}
\field_1(z_1)\field_2(z_2)
\,\OPE\,
\;\sum_{i\in I} f_i^{(\thectt)}(z_1-z_2) \field_i(z_2)
\end{align*}
to indicate
\begin{align*}
\phantom{\Bigg\vert}
\BigCorrFun{\domain}{\thectt}{\field_1(z_1)\field_2(z_2)\cdots\field_n(z_n)}
\,=\,
\sum_{i\in I}
f_i^{(\thectt)}(z_1-z_2)
\BigCorrFun{\domain}{\thectt}{\field_i(z_2)\cdots\field_n(z_n)}
\;+\;
o\big(\vert z_1 - z_2 \vert\big)
\end{align*}
as $\vert z_1 - z_2\vert \to 0$, for all fields~$\field_3,\ldots,\field_n\in\FullFock$ and any domain~$\domain$.	

\subsection{Characterization}
\label{subsec: characterization}

In the following theorem, we provide a characterization of the correlation functions of the symplectic fermions CFT.

\setlist[description]{leftmargin=2.6cm,labelindent=0.1cm,rightmargin=0cm}
\begin{thm}\label{thm: characterization of CF}
	For each complex number $\thectt\in\C$, there exists a unique collection
	\begin{align*}
		\bigg\{\;
		\bigCorrFun{\domain}{\thectt}{\cdots} 
		\  \colon \;
		\FullFock^{\,\otimes n} \longrightarrow \AnalyFun{\Conf{n}{\domain}}{\C}
		\;\bigg\}_{n\in\Zpos}
	\end{align*}
	of linear maps that satisfy the following properties:
\begin{description}[style=sameline]
			
	\item[(FER)]
	The correlation functions of the ground fermions are
	\begin{align*}
		\BigCorrFun{\domain}{\thectt}{\GroundChi(z_1)\,\GroundEta(w_1)\cdots\GroundChi(z_n)\,\GroundEta(w_n)}
		\, = \,
		(4\pi)^n
		\displaystyle\sum_{\sigma\in\mathcal{S}_n} \sgntr(\sigma)
		\displaystyle\prod_{i=1}^n
		\Green_\domain\big(z_i,w_{\sigma(i)}\big)
	\end{align*}
	for $(z_1,\ldots,z_n,w_1\ldots,w_n)\in\Conf{2n}{\domain}$,
	they vanish when the number of
	\linebreak[4]
	$\GroundChi$-inser\-tions and $\GroundEta$-insertions do not coincide,
	and they satisfy
	\begin{align*}
		\BigCorrFun{\domain}{\thectt}{\cdots\GroundChi(z)\,\GroundEta(w)\cdots}
		\,=
		-\,\BigCorrFun{\domain}{\thectt}{\cdots\GroundEta(w)\,\GroundChi(z)\cdots}\,.
	\end{align*}

	\item[(DER)]
	For the Virasoro modes~$\SugL{-1}$ and $\SugLBar{-1}$ we have
	\begin{align*}
		\BigCorrFun{\domain}{\thectt}{\big[\SugL{-1}\field\big](z)\cdots}
		\  = \ 
		\partial_z\,\BigCorrFun{\domain}{\thectt}{\field(z)\cdots}	
	\end{align*}
	and
	\begin{align*}
		\;\BigCorrFun{\domain}{\thectt}{\big[\mspace{1mu}\SugLBar{-1}\field\big](z)\cdots}
		\  = \ 
		\overline{\partial}_z\,\BigCorrFun{\domain}{\thectt}{\field(z)\cdots}\,,
	\end{align*}
	for any field~$\field\in\FullFock$.

\item[(CUR)] 
	The currents have the OPEs
	\begin{align*}
		\HolCurrentChi(z)\field(w)
		\,=\,
		\sum_{k\in\Z}\,
		\frac{\,\big[\AlgChi{k}\field\big](w)\,}{\,(z-w)^{k+1}}\,,
		\mspace{20mu}
		&
		\mspace{59mu}
		\AntiHolCurrentChi(z)\field(w)
		\,=\,
		\sum_{k\in\Z}\,
		\frac{\,\big[\AlgChiBar{k}\field\big](w)\,}{\,(\overline{z}-\overline{w})^{k+1}}\,,
		\\
		\phantom{.}
		\\
		\HolCurrentEta(z)\field(w)
		\,=\,
		\sum_{k\in\Z}\,
		\frac{\,\big[\AlgEta{k}\field\big](w)\,}{\,(z-w)^{k+1}}\,,
		\mspace{20mu}
		&
		\text{ and }
		\mspace{20mu}
		\AntiHolCurrentEta(z)\field(w)
		\,=\,
		\sum_{k\in\Z}\,
		\frac{\,\big[\AlgEtaBar{k}\field\big](w)\,}{\,(\overline{z}-\overline{w})^{k+1}}
	\end{align*}
for any field $\field\in\FullFock$.
	\item[(LOG)] 
	The ground fermions have the OPE
	\begin{align*}
		\GroundChi(z)\GroundEta(w)
		\,\OPE\,
		\log \frac{1}{\vert z-w\vert^2} \,\Ground(w)
		\,-\,
		\big[\GroundPartner + \thectt \, \Ground\big](w)
		\,.
	\end{align*}
	\end{description}
\end{thm}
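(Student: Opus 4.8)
The plan is to establish existence and uniqueness separately, treating the correlation functions as an inductive structure built from the axioms \textbf{(FER)}, \textbf{(DER)}, \textbf{(CUR)}, \textbf{(LOG)}.

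\medskip

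\noindent\textbf{Existence.}
First I would observe that every state in $\FullFock$ can be obtained from the ground fermions $\GroundChi,\GroundEta$ by acting with current modes; more precisely, using the PBW-type basis~\eqref{eq: full basis} and the relations $\HolCurrentChi = \AlgChi{-1}\Ground$, etc., one writes a general basis vector as a word in negative current modes applied to $\Ground = -\AlgChi{0}\GroundEta = \AlgEta{0}\GroundChi$ (up to sign), hence as a derived state obtained from the fermions. The construction of $\bigCorrFun{\domain}{\thectt}{\cdots}$ then proceeds by declaring the fermionic correlators via the explicit Green's-function Pfaffian/determinant formula in \textbf{(FER)}, and \emph{defining} the correlator of any other field by peeling off current modes through \textbf{(CUR)}: given $\field = [\AlgChi{k}\field']$ (or $[\AlgEta{k}\field']$), one sets $\bigCorrFun{\domain}{\thectt}{\field(w)\cdots}$ to be the appropriate contour-integral coefficient in the Laurent expansion of $\bigCorrFun{\domain}{\thectt}{\HolCurrentChi(z)\field'(w)\cdots}$ around $z=w$, the latter being already defined by induction on the total number of current modes. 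One must check this is well-defined (independent of the chosen way of peeling off a mode), which amounts to verifying that the anticommutation relations of $\FullEnvSymFer$ are compatible with the pole structure dictated by \textbf{(CUR)} --- this is the standard Wick/contraction computation for a free fermionic theory, and it works because the two-point function $\bigCorrFun{\domain}{\thectt}{\HolCurrentChi(z)\GroundEta(w)} = 4\pi\,\partial_z\Green_\domain(z,w)$ has exactly the single simple pole $1/(z-w)$ plus a regular (anti)holomorphic part, matching $k\delta_{k+\ell}$. Then \textbf{(DER)} must be checked: since $\SugL{-1}$ is the Sugawara mode, $[\SugL{-1}\field]$ is a specific quadratic-in-current-modes combination applied to $\field$, and one verifies directly that the corresponding contour integrals reproduce $\partial_z$ of the correlator --- this reduces, via the $\HolCurrentChi\field$ OPE, to the identity $\partial_z \bigl(\tfrac{1}{z-w}\bigr) = -\tfrac{1}{(z-w)^2}$ together with the Sugawara formula~\eqref{eq: Sugawara}. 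Finally \textbf{(LOG)} is checked at the level of the defining formula: expanding the $n=1$ fermionic correlator $\bigCorrFun{\domain}{\thectt}{\GroundChi(z)\GroundEta(w)} = 4\pi\Green_\domain(z,w) = \log|z-w|^{-2}\cdot\tfrac{1}{2} \cdot\ldots$ --- here I would need to reconcile the normalization: $4\pi\cdot\tfrac{1}{2\pi}\log\tfrac{1}{|z-w|} = \log\tfrac{1}{|z-w|^2}$, with the constant term $4\pi\,\HarmOfGreen_\domain(w,w)$ playing the role of $\bigCorrFun{\domain}{\thectt}{[\GroundPartner+\thectt\Ground](w)\cdots}$; consistency across all domains forces the $\thectt$-dependence to be precisely the affine ambiguity of Remark~\ref{rmk: automorphisms}, which is where the parameter $\thectt$ enters honestly.

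\medskip

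\noindent\textbf{Uniqueness.}
Conversely, suppose two collections satisfy all four axioms. By \textbf{(FER)} the fermionic correlators agree. By \textbf{(CUR)}, the correlator of any state with one extra current mode is determined as a Laurent coefficient of a correlator with fewer modes, so by induction on the number of current modes appearing (using the spanning statement above that every basis vector is such a word applied to fermions, or to $\Ground$), all correlators involving only $\GroundChi,\GroundEta$ and their descendants agree. It remains to pin down correlators involving $\GroundPartner$ itself, which is \emph{not} in the submodule generated by the fermions; but \textbf{(LOG)} fixes $\bigCorrFun{\domain}{\thectt}{\GroundPartner(w)\cdots}$ (modulo the $\thectt\Ground$ shift, which \textbf{(LOG)} also fixes once $\thectt$ is chosen) as the constant term in the $\GroundChi(z)\GroundEta(w)$ OPE, relative to the already-determined fermionic data. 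Since $\GroundPartner$ is $\FullEnvSymFer$-cyclic, once its correlators are determined so are those of all its descendants via \textbf{(CUR)} and \textbf{(DER)}, and uniqueness follows.

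\medskip

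\noindent\textbf{Main obstacle.}
The hard part will be the well-definedness of the existence construction: one must show that the inductive definition via \textbf{(CUR)} does not depend on the order in which current modes are stripped off, i.e. that it descends to the quotient $\FullFock = \FullEnvSymFer / (\cdots)$ rather than merely being defined on words in the free algebra. This requires checking that the anticommutators $\{\AlgEta{k},\AlgChi{\ell}\} = k\delta_{k+\ell}$, $\{\AlgEta{k},\AlgEta{\ell}\}=\{\AlgChi{k},\AlgChi{\ell}\}=0$ (and the mixed holomorphic/antiholomorphic ones), when translated into statements about double contour integrals of correlators, hold as a consequence of the pole structure imposed by \textbf{(CUR)} and the explicit form of the fermionic two-point function --- essentially a deformation-of-contours argument ($\oint_z\oint_w - \oint_w\oint_z = \oint_{z}\oint_{z\text{ encircling }w}$) combined with the residue computation. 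A secondary delicate point is ensuring the resulting functions are genuinely real-analytic on all of $\Conf{n}{\domain}$ (not merely in the regime where the OPEs converge), which follows because the Green's function $\Green_\domain$ is real-analytic off the diagonal and the contour-integral operations preserve this, but it deserves a careful statement. The compatibility of \textbf{(DER)} with \textbf{(CUR)} --- that the Sugawara $\SugL{-1}$ acts as $\partial_z$ --- is then a bookkeeping consequence rather than an independent obstacle.
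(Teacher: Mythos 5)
Your overall architecture matches the paper's: seed the construction with the fermionic correlators of \textbf{(FER)}, extend by extracting current-mode coefficients via \textbf{(CUR)}, check that this action factors through the defining relations of $\FullFock$, and verify \textbf{(DER)} and \textbf{(LOG)} at the end. You also correctly single out the well-definedness of the mode extraction as a delicate point. However, there is a genuine gap in your existence argument. You open by asserting that every state of $\FullFock$ is obtained from the ground fermions by acting with current modes; this is false --- $\GroundPartner$ is the cyclic vector and it does \emph{not} lie in the submodule generated by $\GroundChi$ and $\GroundEta$ (which contains $\Ground$ but not $\GroundPartner$), a fact you yourself invoke in the uniqueness half. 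Consequently, peeling current modes off the \textbf{(FER)} correlators only produces correlation functions of fields in that proper submodule. A generic $n$-point function of basis fields~\eqref{eq: full basis} requires as seed data the mixed correlators
\begin{align*}
\BigCorrFun{\domain}{\thectt}{\GroundChi(z_1)\GroundEta(w_1)\cdots\GroundChi(z_n)\GroundEta(w_n)\,\GroundPartner(x_k)\cdots\GroundPartner(x_1)}
\end{align*}
with an \emph{arbitrary} number $k$ of $\GroundPartner$-insertions, and your proposal only explains how to handle $k\leq 1$ (as the constant term of a single $\GroundChi\GroundEta$ OPE). The paper's Step~1 is devoted precisely to this: it proceeds by induction on $k$, at each step inserting a pair $\GroundChi(\zeta)\GroundEta(x_{k+1})$, subtracting the logarithmic singularity, and taking $\zeta\to x_{k+1}$ per \textbf{(LOG)}, and it records the answer in the closed formula~\eqref{eq: formula step 1} as a signed sum over bijections weighted by their fixed points. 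Without this recursion --- and a check that the limits exist and are mutually consistent --- your construction does not define the full collection of maps on $\FullFock^{\,\otimes n}$.

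A secondary omission: the quotient~\eqref{eq: full fock} imposes not only the anticommutation relations but also $\AlgChi{k}\GroundPartner=\AlgEta{k}\GroundPartner=\AlgChiBar{k}\GroundPartner=\AlgEtaBar{k}\GroundPartner=0$ for $k>0$ and the identifications $\AlgChi{0}=\AlgChiBar{0}$, $\AlgEta{0}=\AlgEtaBar{0}$; your well-definedness check only mentions the anticommutators. The paper verifies these extra relations explicitly (its equations~\eqref{eq: single current action} and~\eqref{eq: consistency}): the holomorphic and antiholomorphic zero modes acting on $\GroundPartner$ must both reproduce $-\GroundChi$ (resp.\ $-\GroundEta$) inside correlators, which is exactly the Gaberdiel--Kausch locality constraint and is not automatic from the pole structure alone. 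Your uniqueness argument is essentially the paper's construction read as a determination statement, and is sound once the induction on the number of $\GroundPartner$-insertions is made explicit.
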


The proof ---provided below in Section~\ref{subsec: proof}--- goes, in broad strokes, as follows.
Using the properties~\textbf{(FER)} and \textbf{(LOG)} one can obtain, recursively, the correlation functions of an arbitrary number of insertions of~$\GroundChi$, $\GroundEta$ and $\GroundPartner$.
From those correlations, and since the state~$\GroundPartner$ is $\FullEnvSymFer$-cyclic, we can use the properties~\textbf{(DER)} and \textbf{(CUR)} to get the correlation functions of all other fields.

Before providing the details of the proof, we make some remarks and give some examples regarding correlation functions---they can be easily extracted from the proof in Section~\ref{subsec: proof}.

Let $\parity\colon\FullFock\rightarrow\{\boldsymbol{+},\boldsymbol{-}\}$ be the parity defined in~\eqref{eq: parity bos} and \eqref{eq: parity fer}.
That is, the map~$\parity$ has the value $\boldsymbol{+}$~(plus) on $\BosFullFock$ and the value $\boldsymbol{-}$~(minus) on $\FerFullFock$.
The correlation functions of parity-homogeneous fields satisfy the following property.

\begin{rmk}
	Let $\field_1,\field_2\in\FullFock$ be two parity-homogeneous fields. 
	For $(z_1,z_2)\in\Conf{2}{\domain}$, we have
	\begin{align*}
	\BigCorrFun{\domain}{\thectt}{\cdots \field_1(z_1) \field_2(z_2) \cdots}
	\ = \,
	\susyfactor_{\parity(\field_1) , \parity(\field_2)}\;
	\BigCorrFun{\domain}{\thectt}{\cdots \field_2(z_2) \field_1(z_1) \cdots}\,,
	\end{align*}
	where $\susyfactor_{\boldsymbol{+},\boldsymbol{+}}=\susyfactor_{\boldsymbol{+},\boldsymbol{-}}=\susyfactor_{\boldsymbol{-},\boldsymbol{+}}=1$ and $\susyfactor_{\boldsymbol{-},\boldsymbol{-}}=-1$.
	In other words, within correlation functions, fermionic fields anticommute with each other, and bosonic fields commute with any other field.
	\hfill
	$\diamond$
\end{rmk}

We also give some examples of correlation functions.

\begin{ex}\label{ex: identity}
	The ground field~$\Ground$ acts as the identity field within correlation functions.
	That is,
	for any fields~$\field_1,\ldots,\field_n\in\FullFock$ and any insertion points~$(z_1,\ldots,z_n)\in\Conf{n}{\domain}$ distinct from $z\in\domain$, we have
	\begin{align*}
		\BigCorrFun{\domain}{\thectt}{\Ground(z) \field_1(z_1)
			\cdots \field_n(z_n)}
		=\,
		\BigCorrFun{\domain}{\thectt}{\field_1(z_1)
			\cdots \field_n(z_n)}\,.
	\end{align*}
	We also have%
	\footnote{
		One can find instances in the literature where the one-point function of $\Ground$ vanishes.
		Our choice is consistent with conceiveing the symplectic fermions (full) CFT as the scaling limit of a model of statistical mechanics---see \cite{AdRu-fDGFF_to_symfer}.}
	$\bigCorrFun{\domain}{\thectt}{\Ground(z)}=1$ for any $z\in\domain$
	This justifies the name of the field $\Ground$ to be the \emph{identity field}.
	\hfill
	$\diamond$
\end{ex}

\begin{ex}\label{ex: partner 1 pt}
	For any domain~$\domain$ and insertion point~$z\in\domain$, we have
	\begin{align*}
	\BigCorrFun{\domain}{\thectt}{\GroundPartner(z)}
	=
	- \big(\,4\pi\HarmOfGreen_{\domain}(z,z) + \thectt \,\big),
	\end{align*}
	where $\HarmOfGreen_\domain$ is the regular term of the Green's function~$\Green_\domain$ as defined in~\eqref{eq: Green functions}.
	Note that this one-point function is related to the conformal radius~$\mathrm{R}(z;\domain)\coloneqq \exp\big(2\pi\HarmOfGreen_\domain(z,z)\big)$ of the domain~$\domain$ from the point~$z\in\domain$.
	\hfill
	$\diamond$
\end{ex}

The covariance under conformal transformations of the field~$\GroundPartner$ is of logarithmic type.
Let us exemplify this with its one-point function.

\begin{ex}\label{ex: partner 1 pt trans}
Let $\phi \colon \domain \rightarrow \phi(\domain)$ be a conformal map between the domains~$\domain$ and $\phi(\domain)$.
We have
\begin{align*}
	\BigCorrFun{\phi(\domain)}{\thectt}{\GroundPartner\big(\phi(z)\big)}
	=\;
	\BigCorrFun{\domain}{\thectt}{\GroundPartner(z)}
	-\;
	\frac{1}{2\pi}\log\big\vert\phi'(z)\big\vert
\end{align*}
for any insertion point $z\in\domain$.
This follows from the conformal invariance of the Green's function, that is,
\begin{align*}
	\Green_\domain ( z , w )
	\,=\,
	\Green_{ \phi (\domain) } \big( \phi(z) , \phi(w) \big)\,,
\end{align*}
and the explicit value of the of the one-point function of $\GroundPartner$ ---Example~\ref{ex: partner 1 pt}---.
\hfill
$\diamond$
\end{ex}

The automorphisms of $\FullFock$ in Remark~\ref{rmk: automorphisms} can be related to scaling transformations within correlation functions.
Let us exemplify this with the one-point function of the ground field~$\GroundPartner$
under a rescaling of the upper-half plane~$\UpperHalf \coloneqq \{\im z > 0\}$.

\begin{ex}\label{ex: rescaling}
Fix $\lambda > 0$, and consider the conformal map~${\UpperHalf \to \UpperHalf}$ given by the rescaling $z \mapsto \lambda z$.
From Examples~\ref{ex: partner 1 pt} and \ref{ex: partner 1 pt trans},
we get
\begin{align*}
	\BigCorrFun{\UpperHalf}{\thectt}{\GroundPartner(\lambda z)}
	& =\;
	\BigCorrFun{\UpperHalf}{\thectt}{\GroundPartner(z)}
	-\phantom{\bigg\vert}
	\frac{1}{2\pi}\log\lambda\,
	\BigCorrFun{\UpperHalf}{\thectt}{\Ground(z)}
	\\
	& =\;
	\BigCorrFun{\UpperHalf}{\thectt}{\big[\GroundPartner-\thectt_\lambda\Ground\big](z)}
	\phantom{\Bigg\vert}
	\\
	& =\;
	\BigCorrFun{\UpperHalf}{(\thectt+\thectt_\lambda)}{\GroundPartner(z)}
	\phantom{\bigg\vert}
\end{align*}
with $\exp(2\pi\thectt_\lambda) = \lambda$.
\hfill$\diamond$
\end{ex}

\subsection{Proof of Theorem~\ref{thm: characterization of CF}}
\label{subsec: proof}

We provide here the details of the postponed proof.

\begin{proof}[Proof of Theorem~\ref{thm: characterization of CF}]
We break down the proof into three steps.
In Steps~1 and~2, we construct the correlation functions of fields in the basis~\eqref{eq: full basis} in a way that is uniquely determined by the properties~\textbf{(FER)}, \textbf{(DER)},
\textbf{(CUR)}, and~\textbf{(LOG)}.
Then, in Step~3, we check that the correlation functions obtained in Step~2 indeed satisfy the listed properties.

\noindent \textit{Step 1}: Let $\mathbf{z}\coloneqq\{z_1,\ldots,z_n\}$, $\mathbf{w}\coloneqq\{w_1,\ldots,w_n\}$ and $\mathbf{x}\coloneqq\{x_1,\ldots,x_k\}$ be three non-intersecting collections of distinct points in $\domain$,
and let ${\Bij(\mathbf{z},\mathbf{w};\mathbf{x})}$ be the set of bijections from the set~${\mathbf{z}\cup\mathbf{x}}$ onto the set~${\mathbf{w}\cup\mathbf{x}}$.
Our first step consists on proving the formula
\begin{align}\label{eq: formula step 1}
	\BigCorrFun{\domain}{\thectt}
	{\GroundChi(z_1)\GroundEta(w_1)&\cdots\GroundChi(z_n)\GroundEta(w_n)
	\GroundPartner(x_k)\cdots\GroundPartner(x_1)}
	\\\nonumber
	& =\phantom{\bigg\vert}
	(-1)^k\sum_{b\in\Bij(\mathbf{z},\mathbf{w};\mathbf{x})}
	(-1)^b(-1)^{F_b}
	\prod_{\underset{b(y)\neq y}{y\in\mathbf{z}\cup\mathbf{x}}}
	\BigCorrFun{\domain}{\thectt}
	{\GroundChi(y)\GroundEta\big(b(y)\big)}
	\prod_{\underset{b(x)= x}{x\in\mathbf{x}}}
	\BigCorrFun{\domain}{\thectt}
	{\GroundPartner(x)}\,,
\end{align}
where $(-1)^b$ is the signature of the bijection~$b$, and $F_b$ is the number of fixed points of $b$ ---that is, the number of factors in the last product---.
From the arguments that we provide below, we get two important byproducts.  Firstly, the order of the fields in~\eqref{eq: formula step 1} is not relevant so long as the relative order of the $\GroundChi$-insertions and $\GroundEta$-insertions is not altered.
Secondly, the correlation function~\eqref{eq: formula step 1} vanishes when the number of $\GroundChi$-insertions and $\GroundEta$-insertions is not the same.
We proceed recursively by induction on the number~$k$ of $\GroundPartner$-insertions; the formula is trivially checked to hold in the base case $k=0$ when we recover the expression in property~\textbf{(FER)}.
We can use property~\textbf{(LOG)} to find the correlation function with one more $\GroundPartner$-insertion, but, in order to do so, we need to insert the identity field~$\Ground$ too.
Recalling the relation~$\Ground=-\AlgChi{0}\GroundEta$, we can use the properties \textbf{(CUR)} and \mbox{\textbf{(DER)}}, to get
\begin{align*}
\BigCorrFun{\domain}{\thectt}
{\cdots
	\GroundChi(z_n)\GroundEta(w_n)
	\Ground(x_{k+1})\GroundPartner(x_k)
	\cdots}
& \,=-\,
\BigCorrFun{\domain}{\thectt}
{\cdots
	\GroundChi(z_n)\GroundEta(w_n)
	\big[\,\AlgChi{0}\GroundEta\,\big](x_{k+1})\GroundPartner(x_k)
	\cdots}
\\
& \,=-\,
\oint_{\gamma}
\frac{\dd\zeta}{2\pi\ii}
\BigCorrFun{\domain}{\thectt}{\cdots\HolCurrentChi(\zeta)\GroundEta(x_{k+1})
	\cdots}
\phantom{\bigg\vert}
\\
& \,=-\,
\oint_{\gamma}
\frac{\dd\zeta}{2\pi\ii}\,
{\partial_\zeta}
\BigCorrFun{\domain}{\thectt}{\cdots\GroundChi(\zeta)\GroundEta(x_{k+1})\cdots}
\,,
\phantom{\Bigg\vert}
\end{align*}
where $\gamma$ is a positively-oriented contour around $x_{k+1}\in\domain$ small enough not to encircle any other insertion points in the correlation function at hand.
In this case, we can rewrite formula~\eqref{eq: formula step 1} as
\begin{align*}
\BigCorrFun{\domain}{\thectt}{\cdots\GroundChi(&z_n)\GroundEta(w_n)\GroundChi(\zeta)\GroundEta(x_{k+1})\GroundPartner(x_k)\cdots}
=
\\
&
\BigCorrFun{\domain}{\thectt}{\GroundChi(\zeta)\GroundEta(x_{k+1})}
\BigCorrFun{\domain}{\thectt}{\cdots\GroundChi(z_n)\GroundEta(w_n)\GroundPartner(x_k)\cdots}
+
\text{ harmonic w.r.t $\zeta$ at $x_{k+1}$}\,.\phantom{\bigg\vert}
\end{align*}
Recalling the properties of the Green's function~\eqref{eq: Green functions} and the factorization of the laplacian~$\Delta=4\partial \overline\partial$, the integral can be evaluated to get
\begin{align}\label{eq: indentity field}
	\BigCorrFun{\domain}{\thectt}
	{\GroundChi(z_1)\GroundEta(w_1)\cdots
	\Ground(x_{k+1})\GroundPartner(x_k)
	\cdots\GroundPartner(x_1)}
	=
	\BigCorrFun{\domain}{\thectt}
	{\GroundChi(z_1)\GroundEta(w_1)\cdots\GroundPartner(x_k)\cdots\GroundPartner(x_1)}\,.
\end{align}
Note that this argument is valid for $k=0$ and proves $\bigCorrFun{\domain}{\thectt}{\Ground(z)}=1$ for any $z\in\domain$.
Then, on the one hand, we can use property \textbf{(LOG)} which dictates that, as $\vert\zeta-x_{k+1}\vert\to0$, 
\begin{align}\label{eq: recursion partner CF}
		\BigCorrFun{\domain}{\thectt}
	{\GroundChi(z_1)\GroundEta(w_1)\cdots&\,\GroundChi(\zeta)\GroundEta(x_{k+1})
		\cdots\GroundPartner(x_1)}
	=
	\\\nonumber
	& -\,\BigCorrFun{\domain}{\thectt}
	{\GroundChi(z_1)\GroundEta(w_1)\cdots\,\GroundPartner(x_{k+1})
		\cdots\GroundPartner(x_1)}
	\\\nonumber
	& +\,
	\bigg(\log\frac{1}{\vert\zeta-x_{k+1}\vert^2}-\thectt\bigg)
	\BigCorrFun{\domain}{\thectt}
	{\GroundChi(z_1)\GroundEta(w_1)\cdots
	\Ground(x_{k+1})\cdots\GroundPartner(x_1)}
	\\\nonumber
	& +\,
	\phantom{\bigg\vert}
	o\big(\vert \zeta - x_{k+1}\vert\big)\,.
\end{align}
On the other hand, by induction hypothesis we can rewrite
\begin{align}\label{eq: rewriting}
\BigCorrFun{\domain}{\thectt}
{&\GroundChi(z_1)\GroundEta(w_1)\cdots\GroundChi(\zeta)\GroundEta(x_{k+1})
	\cdots\GroundPartner(x_1)}
\phantom{\Bigg\vert}
\\\nonumber
\phantom{\Bigg\vert}
&
=
\BigCorrFun{\domain}{\thectt}
{\GroundChi(\zeta)\GroundEta(x_{k+1})}
(-1)^k
\sum_{\underset{b(\zeta)=x_{k+1}}{b\in\Bij(\mathbf{z}^+,\mathbf{w}^+;\mathbf{x})}}
(-1)^b(-1)^{F_b}
\prod_{\underset{b(y)\neq y}{y\in\mathbf{z}\cup\mathbf{x}}}
\BigCorrFun{\domain}{\thectt}
{\GroundChi(y)\GroundEta\big(b(y)\big)}
\prod_{\underset{b(x)= x}{x\in\mathbf{x}}}
\BigCorrFun{\domain}{\thectt}
{\GroundPartner(x)}
\\\nonumber
\phantom{\Bigg\vert}& \mspace{40mu} +\,
(-1)^k
\sum_{\underset{b(\zeta)\neq x_{k+1}}{b\in\Bij(\mathbf{z}^+,\mathbf{w}^+;\mathbf{x})}}
(-1)^b(-1)^{F_b}
\prod_{\underset{b(y)\neq y}{y\in\mathbf{z}\cup\mathbf{x}}}
\BigCorrFun{\domain}{\thectt}
{\GroundChi(y)\GroundEta\big(b(y)\big)}
\prod_{\underset{b(x)= x}{x\in\mathbf{x}}}
\BigCorrFun{\domain}{\thectt}
{\GroundPartner(x)}\,,
\end{align}
with $\mathbf{z}^+=\{z_1,\ldots,z_n,\zeta\}$ and $\mathbf{w}^+=\{w_1,\ldots,w_n,x_{k+1}\}$.
Note that the subset of bijections~${b\in\Bij(\mathbf{z}^+,\mathbf{w}^+;\mathbf{x})}$ satisfying $b(\zeta)=x_{k+1}$ ---that is, the ones that contribute to the first term in~\eqref{eq: rewriting}--- is naturally in one-to-one correspondence with the subset of bijections~$b^+\in\Bij(\mathbf{z},\mathbf{w};\mathbf{x}^+)$ satisfying $b^+(x_{k+1})=x_{k+1}$, where $\mathbf{x}^+\coloneqq\{x_1,\ldots,x_{k+1}\}$.
Moreover, the correspondence satisfies $(-1)^k(-1)^{F_b}=(-1)^{k+1}(-1)^{F_{b^+}}$.
Note, as well, that, in the second term in~\eqref{eq: rewriting}, one can take the limit $\zeta\to x_{k+1}$; it can hence be rewritten as a sum over the subset of bijections~$b^+\in\Bij(\mathbf{z},\mathbf{w};\mathbf{x}^+)$ satisfying $b^+(x_{k_1})\neq x_{k_1}$.
Combining~\eqref{eq: rewriting} and~\eqref{eq: recursion partner CF}, we get
\begin{align*}
\BigCorrFun{\domain}{\thectt}
{\GroundChi(z_1)&\GroundEta(w_1)\cdots\,\GroundPartner(x_{k+1})
	\cdots\GroundPartner(x_1)}
\phantom{\Bigg\vert}
+\,
\phantom{\bigg\vert}
o\big(\vert \zeta - x_{k+1}\vert\big)\,
\\\nonumber
=
&
\;
\Bigg( -\BigCorrFun{\domain}{\thectt}{\GroundChi(\zeta)\GroundEta(x_{k+1})}
+
\log\frac{1}{\vert\zeta-x_{k+1}\vert^2}
-
\thectt \Bigg)\times
\\\nonumber
&\mspace{80mu}
\times
\phantom{\Bigg\vert}
(-1)^{k+1}
\sum_{\underset{b(x_{k+1})=x_{k+1}}{b\in\Bij(\mathbf{z},\mathbf{w};\mathbf{x}^+)}}
(-1)^b(-1)^{F_b}
\prod_{\underset{b(y)\neq y}{y\in\mathbf{z}\cup\mathbf{x}}}
\BigCorrFun{\domain}{\thectt}
{\GroundChi(y)\GroundEta\big(b(y)\big)}
\prod_{\underset{b(x)= x}{x\in\mathbf{x}}}
\BigCorrFun{\domain}{\thectt}
{\GroundPartner(x)}
\\\nonumber
& -
(-1)^k
\sum_{\underset{b(\zeta)\neq x_{k+1}}{b\in\Bij(\mathbf{z},\mathbf{w};\mathbf{x}^+)}}
(-1)^b(-1)^{F_b}
\prod_{\underset{b(y)\neq y}{y\in\mathbf{z}\cup\mathbf{x}^+}}
\BigCorrFun{\domain}{\thectt}
{\GroundChi(y)\GroundEta\big(b(y)\big)}
\prod_{\underset{b(x)= x}{x\in\mathbf{x}^+}}
\BigCorrFun{\domain}{\thectt}
{\GroundPartner(x)}
\end{align*}
Note that the factor in parenthesis has a well-defined limit by the property~\textbf{(FER)} and the properties of the Green's function~\eqref{eq: Green functions}.
In particular, recalling that we have ${\bigCorrFun{\domain}{\thectt}{\Ground(x)}=1}$ for all $x\in\domain$, the property~\textbf{(LOG)} dictates
\begin{align}\label{eq: 1pt log partner}
	-\;
	\BigCorrFun{\domain}{\thectt}{\GroundChi(\zeta)\GroundEta(x)}
	+\;
	\log\frac{1}{\vert\zeta-x\vert^2}
	\,-
	\thectt
	\;\;\xrightarrow{\ \ \zeta\,\to\,x\ \ }\;\; \BigCorrFun{\domain}{\thectt}{\GroundPartner(x)}
	=
	-4\pi\HarmOfGreen_\domain(x,x)-\thectt
	\,,
\end{align}
concluding the proof of the expression~\eqref{eq: formula step 1}.
This recursive construction guarantees the uniqueness of such correlation functions.

\noindent\textit{Step 2}:
We can exploit the properties~\textbf{(DER)} and~\textbf{(CUR)} to recursively construct in a unique way the correlation functions of all fields from the correlation functions in Step~1.
First, the algebraic relations in Remark~\ref{rmk: currents derivatives of fermions} and~\textbf{(DER)} imply
\begin{align*}
	\phantom{\bigg\vert}
	\bigCorrFun{\domain}{\thectt}{\,\HolCurrentChi (z)\cdots}
	&
	=
	\partial_{z}\bigCorrFun{\domain}{\thectt}{\,\GroundChi(z)\cdots}\,,
	\mspace{70mu}
	\phantom{\bigg\vert}
	\bigCorrFun{\domain}{\thectt}{\,\AntiHolCurrentChi (z)\cdots}
	=
	\overline{\partial}_{z}\bigCorrFun{\domain}{\thectt}{\,\GroundChi(z)\cdots}\,,
	\\
	\phantom{\bigg\vert}
	\bigCorrFun{\domain}{\thectt}{\,\HolCurrentEta (z)\cdots}
	&
	=
	\partial_{z}\bigCorrFun{\domain}{\thectt}{\,\GroundEta(z)\cdots}\,,
	\mspace{15mu}
	\text{ and }
	\mspace{15mu}
	\phantom{\bigg\vert}
	\bigCorrFun{\domain}{\thectt}{\,\AntiHolCurrentEta (z)\cdots}
	=
	\overline{\partial}_{z}\bigCorrFun{\domain}{\thectt}{\,\GroundEta(z)\cdots}\,.
\end{align*}
Then, since the state~$\GroundPartner$ is cyclic in~$\FullFock$, we can obtain the correlation functions of any field by repeated extraction of OPE coefficients with the currents.
That is, given any field~$\field\in\FullFock$, we define
\begin{align}\label{eq: CF action chi}
\BigCorrFun{\domain}{\thectt}{\cdots \big[\AlgChi{k} \field \big](z) \cdots}
\;\coloneqq&\;\phantom{-}\;
\oint_\gamma\frac{\dd\zeta}{2\pi\ii}(\zeta-z)^k\,
\BigCorrFun{\domain}{\thectt}{\cdots \HolCurrentChi(\zeta)\field(z) \cdots}\,,
\\\label{eq: CF action eta}
\BigCorrFun{\domain}{\thectt}{\cdots \big[\AlgEta{k} \field \big](z) \cdots}
\;\coloneqq&\;\phantom{-}\;
\oint_\gamma\frac{\dd\zeta}{2\pi\ii}(\zeta-z)^k\,
\BigCorrFun{\domain}{\thectt}{\cdots \HolCurrentEta(\zeta)\field(z) \cdots}\,,
\\\label{eq: CF action chi bar}
\BigCorrFun{\domain}{\thectt}{\cdots \big[\AlgChiBar{k} \field \big](z) \cdots}
\;\coloneqq&\;-
\oint_\gamma\frac{\dd\overline\zeta}{2\pi\ii}(\overline\zeta-\overline z)^k\,
\BigCorrFun{\domain}{\thectt}{\cdots \AntiHolCurrentChi(\zeta)\field(z) \cdots}\,,
\\\label{eq: CF action eta bar}
\text{ and }\;\;\;
\BigCorrFun{\domain}{\thectt}{\cdots \big[\AlgEtaBar{k} \field \big](z) \cdots}
\;\coloneqq&\;-
\oint_\gamma\frac{\dd\overline\zeta}{2\pi\ii}(\overline\zeta-\overline z)^k\,
\BigCorrFun{\domain}{\thectt}{\cdots \AntiHolCurrentEta(\zeta)\field(z) \cdots}\,,
\end{align}
where $\gamma$ is any positively-oriented contour small enough not to enclose any of the other field insertions.

To ensure the well-posedness of~\eqref{eq: CF action chi}~--~\eqref{eq: CF action eta bar}, we check that the current modes satisfy the appropriate relations.
First of all, using formula~\eqref{eq: formula step 1}, it is straightforward to check
\begin{align*}
	\BigCorrFun{\domain}{\thectt}{\cdots \big[\AlgEta{k}\AlgChi{\ell} \field \big](z) \cdots}
	+
	\BigCorrFun{\domain}{\thectt}{\cdots \big[ \AlgChi{\ell}\AlgEta{k} \field \big](z) \cdots}
	\,=\,
	k\,\delta_{k+\ell}
	\BigCorrFun{\domain}{\thectt}{\cdots \field(z) \cdots}\,.	
\end{align*}
Similarly, one can check that the current modes satisfy the anticommutation relations of two anticommuting copies of the symplectic fermions algebra.
In other words, the action defined by~\eqref{eq: CF action chi}~--~\eqref{eq: CF action eta bar} factors, at least, through $\FullEnvSymFer$.
It remains to check that the action factors through the quotient that defines the logarithmic Fock space~$\FullFock$.
Using formula~\eqref{eq: formula step 1}, we get
\begin{align}\label{eq: single current action}
\BigCorrFun{\domain}{\thectt}{\cdots \big[\AlgChi{k} \GroundPartner \big](z) \cdots}
\,=&\;
\oint_\gamma\frac{\dd\zeta}{2\pi\ii}(\zeta-z)^k\,
\BigCorrFun{\domain}{\thectt}{\cdots \HolCurrentChi(\zeta)\GroundPartner(z) \cdots}
\\\nonumber
=&\;
\oint_\gamma\frac{\dd\zeta}{2\pi\ii}(\zeta-z)^k\,
\partial_\zeta\Bigg(
\BigCorrFun{\domain}{\thectt}{\GroundChi(\zeta) \GroundEta(z)}
\BigCorrFun{\domain}{\thectt}{\cdots\GroundChi(z)\cdots}
\\\nonumber
&
\mspace{180mu}
\phantom{\Bigg\vert}
\;+\;
\mspace{10mu}
\text{ harmonic } \text{ w.r.t. }\; \zeta\; \text{ at }\; z\;\;
\Bigg)
\\\nonumber
=&\;-\BigCorrFun{\domain}{\thectt}{\cdots\GroundChi(z)\cdots}
\oint_\gamma\frac{\dd\zeta}{2\pi\ii}(\zeta-z)^{k-1}\phantom{\Bigg\vert}
\\\nonumber
&
\mspace{180mu}
\phantom{\Bigg\vert}
\;+\;\;
\oint_\gamma\frac{\dd\zeta}{2\pi\ii}(\zeta-z)^{k}
\Big(\text{holom. w.r.t $\zeta$ at $z$}\Big)\,,
\end{align}
and similar expressions for the action of the other current modes on $\GroundPartner$.
It is clear from~\eqref{eq: single current action} that the action factors through the relations ${\AlgChi{k} \GroundPartner = \AlgEta{k} \GroundPartner = \AlgChiBar{k} \GroundPartner = \AlgEtaBar{k} \GroundPartner = 0}$ for $k>0$.
By the anticommutation relations, to check the remaining relations~$\AlgChi{0}=\AlgChiBar{0}$ and $\AlgEta{0}=\AlgEtaBar{0}$ it suffices to consider them on the ground state~$\GroundPartner$.
From~\eqref{eq: single current action}, and by the same computation for $\AlgChiBar{0}$, we get
\begin{align}\label{eq: consistency}
\BigCorrFun{\domain}{\thectt}{\cdots \big[\AlgChi{0} \GroundPartner \big](z) \cdots}
\,
=\;
-\,\BigCorrFun{\domain}{\thectt}{\cdots\GroundChi(z)\cdots}
\,
=\;
\BigCorrFun{\domain}{\thectt}{\cdots \big[\AlgChiBar{0} \GroundPartner \big](z) \cdots}\,.
\end{align}
Similar computations lead to $\AlgEta{0}\GroundPartner=\AlgEtaBar{0}\GroundPartner$ within correlation functions.
We conclude hence that the action defined by~\eqref{eq: CF action chi}~--~\eqref{eq: CF action eta bar} factors through the quotient~$\FullFock$.

\noindent
\textit{Step 3}:
It remains to check the properties~\textbf{(FER)}, \textbf{(DER)},
\textbf{(CUR)}, and \textbf{(LOG)}.

\noindent
\textbf{(FER)}:
In terms of the basis~\eqref{eq: full basis}, the ground fermions are written as~${\GroundChi=-\AlgChi{0}\GroundPartner}$ and \linebreak ${\GroundEta=-\AlgEta{0}\GroundPartner}$.
From~\eqref{eq: consistency} and its equivalent for the \mbox{$\GroundEta$-currents}, we see that those relations are satisfied inside correlation functions.

\noindent
\textbf{(DER)}:
It is enough to prove it for fields in the basis~\eqref{eq: full basis}
\begin{align*}
	\field
	\coloneqq
	\AlgEta{-k_r}\cdots\AlgEta{-k_2}\AlgEta{-k_1}
	\AlgChi{-\ell_s}\cdots\AlgChi{-\ell_2}\AlgChi{-\ell_1}
	\AlgEtaBar{-\bar{k}_{\bar{r}}}\cdots\AlgEtaBar{-\bar{k}_2}\AlgEtaBar{-\bar{k}_1}
	\AlgChiBar{-\bar{\ell}_{\bar{s}}}\cdots\AlgChiBar{-\bar{\ell}_2}\AlgChiBar{-\bar{\ell}_1}
	\GroundPartner\,,
\end{align*}
and we proceed inductively on $r$, $s$, $\bar{r}$, and $\bar{s}$.
We spell out here the induction on $r$.
For the sake of brevity, we check the base case $\field=\GroundPartner$ when no other fields are inserted.
Recall that, by the Sugawara construction, we have $\SugL{-1} \GroundPartner = \AlgEta{-1}\GroundChi-\AlgChi{-1}\GroundEta$
\begin{align*}
	\BigCorrFun{\domain}{\thectt}{ \big[ \SugL{-1} \GroundPartner \big](z) }
	\;&=\;
	\BigCorrFun{\domain}{\thectt}{ \big[ \AlgEta{-1}\GroundChi \big](z) }
	-
	\BigCorrFun{\domain}{\thectt}{ \big[ \AlgChi{-1}\GroundEta \big](z) }
	\\
	&=\;
	\oint_\gamma \frac{\dd\zeta}{2\pi\ii}\,
	\frac{1}{\zeta - z}
	\Bigg(
	\BigCorrFun{\domain}{\thectt}{ \HolCurrentEta(\zeta)\GroundChi (z) }
	-
	\BigCorrFun{\domain}{\thectt}{ \HolCurrentChi(\zeta)\GroundEta (z) }
	\Bigg)
	\\
	&=\;
	-4\pi
	\oint_\gamma \frac{\dd\zeta}{2\pi\ii}\,
	\frac{1}{\zeta - z}
	\Bigg(
	\partial_{(1)} \Green_\domain(\zeta,z)
	+
	\partial_{(2)} \Green_\domain(z,\zeta)
	\Bigg)
	\\
	&=\;
	-4\pi
	\Big(
	\partial_{(1)} \HarmOfGreen_\domain(z,z)
	+
	\partial_{(2)} \HarmOfGreen_\domain(z,z)
	\Big)
	\phantom{\bigg\vert}
	\\
	&=\;
	\partial_{z} \BigCorrFun{\domain}{\thectt}{ \GroundPartner(z) }\,,
\end{align*}
where in the last equality we have used the explicit expression~\eqref{eq: 1pt log partner}.
For the general base case it is enough to consider other insertions of $\GroundChi$, $\GroundEta$ and $\GroundPartner$ since any other correlation function can be extracted by contour integration around points away from $z$.
This can be handled using~\eqref{eq: formula step 1}.
For the induction step, we have
\begin{align*}
	\BigCorrFun{\domain}{\thectt}{\cdots 
		\big[ \SugL{-1} & \AlgEta{-k_{r+1}} \field \big](z)
		\cdots}
	\\
	&=\;
	\BigCorrFun{\domain}{\thectt}{\cdots
		\big[ \AlgEta{-k_{r+1}} \SugL{-1} \field \big](z)
		\cdots}
	\phantom{\Bigg\vert}
	+\;
	\BigCorrFun{\domain}{\thectt}{\cdots
		\Big[ [ \SugL{-1} , \AlgEta{-k_{r+1}} ] \field \Big](z)
		\cdots}
	\\
	&=\;
	\oint_\gamma \frac{\dd\zeta}{2\pi\ii}
	\frac{\bigCorrFun{\domain}{\thectt}{\cdots
			\HolCurrentEta(\zeta) \big[ \SugL{-1} \field \big](z)
			\cdots}}{(\zeta - z)^{k_{r+1}}}
	\phantom{\Bigg\vert}
	+\;
	k_{r+1}
	\BigCorrFun{\domain}{\thectt}{\cdots
		\big[  \AlgEta{-k_{r+1}-1} \field \big](z)
		\cdots}\,,
\end{align*}
where $\gamma$ is a small enough contour surrounding $z$.
By induction hypothesis, the contour integral can be expressed as
\begin{align*}
	\oint_\gamma \frac{\dd\zeta}{2\pi\ii} &\;
	\frac{\partial_{z}\bigCorrFun{\domain}{\thectt}{\cdots
			\HolCurrentEta(\zeta) \field(z)
			\cdots}}{(\zeta - z)^{k_{r+1}}}
	\\
	&
	\mspace{50mu}
	=\;
	\oint_\gamma \frac{\dd\zeta}{2\pi\ii}
	\Bigg[\,
	\partial_{z}\Bigg(
	\frac{\bigCorrFun{\domain}{\thectt}{\cdots
			\HolCurrentEta(\zeta) \field(z)
			\cdots}}{(\zeta - z)^{k_{r+1}}}
		\Bigg)
	-\;
	k_{r+1}
	\frac{\bigCorrFun{\domain}{\thectt}{\cdots
			\HolCurrentEta(\zeta) \field(z)
			\cdots}}{(\zeta - z)^{k_{r+1}+1}}
	\,\Bigg]
	\\
	&
	\mspace{50mu}
	=\;
	\partial_{z}\BigCorrFun{\domain}{\thectt}{\cdots
			\big[\AlgEta{-k_{r+1}} \field \big] (z)
			\cdots}
	-\;
	k_{r+1}
	\BigCorrFun{\domain}{\thectt}{\cdots
		\big[\AlgEta{-(k_{r+1}+1)} \field \big] (z)
		\cdots}\,,
	\phantom{\Bigg\vert}
\end{align*}
which concludes the proof for the holomorphic mode~$\AlgL{-1}$.
The proof for $\AlgLBar{-1}$ follows the same lines.

\noindent
\textbf{(CUR)}:
The correlation function
\begin{align*}
	\BigCorrFun{\domain}{\thectt}{\field_1(z_1)
		\cdots \field_n(z_n)
	\HolCurrentChi(z)\field(w)}
\end{align*}
is holomorphic with respect to~$z$ away from the other insertions~$z_1,\ldots,z_n$ and $w$ by construction ---Steps~1 and~2---.
Hence, it has a Laurent expansion around $w$, which, by construction ---\eqref{eq: CF action chi}--\eqref{eq: CF action eta bar}---, must be
\begin{align*}
	\BigCorrFun{\domain}{\thectt}{\field_1(z_1)
		\cdots \field_n(z_n)
		\HolCurrentChi(z)\field(w)}
	=\;
	\sum_{k\in\Z}\,
	\frac{\bigCorrFun{\domain}{\thectt}{\field_1(z_1)
			\cdots \field_n(z_n)\,
			\big[\AlgChi{-k}\field\big](w)\,}}
		{(z-w)^{k+1}}\,.
\end{align*}
The same argument works for the currents $\AntiHolCurrentChi$, $\HolCurrentEta$ and $\AntiHolCurrentEta$.

\noindent
\textbf{(LOG)}:
It is enough to consider the correlation function
\begin{align*}
	\BigCorrFun{\domain}{\thectt}
	{\GroundChi(z)\GroundEta(w)\GroundChi(z_1)\GroundEta(w_1)\cdots\GroundChi(z_n)\GroundEta(w_n)
		\GroundPartner(x_k)\cdots\GroundPartner(x_1)}
\end{align*}
from~\eqref{eq: formula step 1}, since a more general one is obtained by differentiation and residue extraction away from $z$ and $w$.
Let us write the correlation function at hand as
\begin{align}\label{eq: OPE fermions}
\BigCorrFun{\domain}{\thectt}
{\GroundChi(z)\GroundEta(w)\,\GroundChi(z_1)&\GroundEta(w_1)\cdots\GroundChi(z_n)\GroundEta(w_n)
	\GroundPartner(x_k)\cdots\GroundPartner(x_1)}\phantom{\bigg\vert}\\ \nonumber
&=\phantom{\bigg\vert}
\BigCorrFun{\domain}{\thectt}
{\GroundChi(z)\GroundEta(w)}
\BigCorrFun{\domain}{\thectt}{
\GroundChi(z_1)\GroundEta(w_1)\cdots\GroundChi(z_n)\GroundEta(w_n)
	\GroundPartner(x_k)\cdots\GroundPartner(x_1)}
+
\; (\,\star\,)\,,
\end{align}
where the term $(\,\star\,)$ contains the rest of terms in~\eqref{eq: formula step 1}.
Let us already make the observation that, with a simple combinatorial argument, from~\eqref{eq: formula step 1} we get
\begin{align}\label{eq: combinatorial fact}
\BigCorrFun{\domain}{\thectt}
{\GroundPartner(w)\,\GroundChi(z_1)\GroundEta(w_1)&\cdots\GroundChi(z_n)\GroundEta(w_n)
	\GroundPartner(x_k)\cdots\GroundPartner(x_1)}\phantom{\bigg\vert}
\\\nonumber
&=\phantom{\bigg\vert}
\BigCorrFun{\domain}{\thectt}
{\GroundPartner(w)}
\BigCorrFun{\domain}{\thectt}{
	\GroundChi(z_1)\GroundEta(w_1)\cdots\GroundChi(z_n)\GroundEta(w_n)
	\GroundPartner(x_k)\cdots\GroundPartner(x_1)}
-
\; (\,\star\,)\,.
\end{align}
The two-point function can be expanded around $w$ as
\begin{align*}
	\BigCorrFun{\domain}{\thectt}
	{\GroundChi(z)\GroundEta(w)}
	\;=\;&
	\,4\pi\, \Green_\domain(z,w)\\
	=\;&\log\frac{1}{\vert z-w\vert^2}
	\,+\,
	4\pi\,
	\HarmOfGreen_\domain(z,w)\\
	=\;&\log\frac{1}{\vert z-w\vert^2}
	\,+\,
	4\pi\,
	\HarmOfGreen_\domain(w,w)
	\,+\, o\big(\vert z-w\vert\big)\,.	
\end{align*}
Then, using this expansion, the identities~\eqref{eq: OPE fermions} and~\eqref{eq: combinatorial fact} and the explicit one-point function~\eqref{eq: 1pt log partner}, we get
\begin{align*}
\BigCorrFun{\domain}{\thectt}
{\GroundChi(z)\GroundEta(w)\,&\GroundChi(z_1)\GroundEta(w_1)\cdots\GroundChi(z_n)\GroundEta(w_n)
	\GroundPartner(x_k)\cdots\GroundPartner(x_1)}\phantom{\bigg\vert}\\ \nonumber
&=\phantom{\bigg\vert}
\bigg(\log\frac{1}{\vert z-w\vert^2}
-\thectt
\bigg)
\BigCorrFun{\domain}{\thectt}{
	\GroundChi(z_1)\GroundEta(w_1)\cdots\GroundChi(z_n)\GroundEta(w_n)
	\GroundPartner(x_k)\cdots\GroundPartner(x_1)}
\\
&
\mspace{50mu}
-\;
\BigCorrFun{\domain}{\thectt}
{\GroundPartner(w)\,\GroundChi(z_1)\GroundEta(w_1)\cdots\GroundChi(z_n)\GroundEta(w_n)
	\GroundPartner(x_k)\cdots\GroundPartner(x_1)}\phantom{\bigg\vert}
+\;\,
o\big(\vert z - w\vert\big)\,.
\end{align*}
The proof of \textbf{(LOG)} is complete by recalling that we have $\bigCorrFun{\domain}{\thectt}{\Ground(z)\cdots}=\bigCorrFun{\domain}{\thectt}{\cdots}$.
\end{proof}

\newpage

\titleformat{\section}
{\normalfont\Large\bfseries}{\thesection}{0pt}{}


\begin{thebibliography}{9999999}

\footnotesize

\bibitem[Abe05]{Abe-symfer_VOA}
\textsc{T.~Abe}. 
\textit{A $\Z_2$-orbifold model of the symplectic fermionic vertex operator superalgebra}. 
Mathematische Zeitschrift \textbf{255}:~755--792
(2007)

\vspace{-7pt}

\bibitem[Ada23]{Ada-discrete SF}
\textsc{D.~Adame-Carrillo}. 
\textit{Discrete symplectic fermions on double dimers and their Virasoro represen\-tation}. 
Annales Henri Poincar\'e \textbf{26}:~845--876
(2025)

\vspace{-7pt}

\bibitem[ABK24]{ABK-DGFF_free_boson}
\textsc{D.~Adame-Carrillo, D.~Behzad and K.~Kyt\"ol\"a}. 
\textit{Fock space of local fields of the discrete GFF and its scaling limit bosonic CFT}.
Preprint:~\href{https://arxiv.org/abs/2404.15490}{\texttt{arXiv:2404.15490}}
(2024)

\vspace{-7pt}

\bibitem[AdRu25]{AdRu-fDGFF_to_symfer}
\textsc{D.~Adame-Carrillo and W.M.~Ruszel}. 
\textit{The fermionic DGFF and its scaling limit CFT}.
Preprint:~\href{https://arxiv.org/abs/2508.03972}{\texttt{arXiv:2508.03972}}
(2025)

\vspace{-7pt}

\bibitem[BCHS21]{BCHS-fDGFF_forests}
\textsc{R.~Bauerschmidt, N.~Crawford, T.~Helmuth, and A.~Swan} 
\textit{Random span\-ning forests and hyperbolic symmetry.}
Communications in Mathematical Physics \textbf{381}:~1223--1261
(2021)

\vspace{-7pt}

\bibitem[Cam24]{Camia-conf_conn_prob}
\textsc{F.~Camia}. 
\textit{Conformal covariance of connection probabilities and fields in 2D critical percolation}. 
Communications on Pure and Applied Mathematics \textbf{77}(3):~2138--2176
(2024)

\vspace{-7pt}

\bibitem[CaFe24]{CamiaFeng-log_percolation}
\textsc{F.~Camia and Y.~Feng}. 
\textit{Logarithmic correlation functions in 2D critical percolation}. 
Journal of High Energy Physics \textbf{2024}:~103
(2024)

\vspace{-7pt}

%

%

\bibitem[CJS$^+$04]{CJSSS_trees-and-forests}
\textsc{S.~Caracciolo, J.L.~Jacobsen, H.~Saleur, A.D.~Sokal, and A.~Sportiello}. 
\textit{Fermionic field theory for trees and forests}. 
Physical Review Letters \textbf{93}(8):~080601
(2004)

\vspace{-7pt}

\bibitem[Car92]{Cardy_percolation}
\textsc{J.L.~Cardy}. 
\textit{Critical percolation in finite geometries}. 
Journal of Physics A: Mathematical and General \textbf{25}(4):~L201
(1992)

\vspace{-7pt}

\bibitem[CHI15]{CHI-spin_correlations}
\textsc{D.~Chelkak, C.~Hongler, and K.~Izyurov}. 
\textit{Conformal invariance of spin correlations in the planar Ising model}. 
Annals of Mathematics \textbf{181}(3):~1087--1138
(2015)

\vspace{-7pt}

\bibitem[CHI21]{CHI-primary_field_correlations}
\textsc{D.~Chelkak, C.~Hongler, and K.~Izyurov}. 
\textit{Correlations of primary fields in the critical Ising model}. 
Preprint:~\href{https://arxiv.org/abs/2103.10263}{\texttt{arXiv:2103.10263}}
(2021)

\vspace{-7pt}

\bibitem[CCRR23]{CCRR_fGFF}
\textsc{L.~Chiarini, A.~Cipriani, A.~Rapoport, and W.~Ruszel}. 
\textit{Fermionic gaussian free field struc\-ture in the abelian sandpile model and uniform spanning tree}. 
Preprint:~\href{https://arxiv.org/abs/2309.08349}{\texttt{arXiv:2309.08349}}
(2023)

\vspace{-7pt}

\bibitem[CrRi13]{CreutzigRidout-review}
\textsc{T.~Creutzig and D.~Ridout}. 
\textit{Logarithmic conformal field theory: beyond an introduction}. 
{Journal of Physics A: Mathematical and Theoretical \textbf{46}}:~494006
(2013)

\vspace{-7pt}

\bibitem[DaRu17]{DavydovRunkel_holo-symfer}
\textsc{A.~Davydov and I.~Runkel}. 
\textit{Holomorphic symplectic fermions}. 
Mathematische Zeitschrift \textbf{285}:~967--1006
(2017)

\vspace{-7pt}

\bibitem[DMS97]{yellow_book}
\textsc{P.~Di~Francesco, P.~Mathieu, and D.~S\'en\'echal}. 
\textit{Conformal field theory}. 
{Graduate texts in contemporary Physics},
Springer-Verlag (1997)

\vspace{-7pt}

\bibitem[Dup86]{Duplantier_dense-polymers}
\textsc{B.~Duplantier}. 
\textit{Exact critical exponents for two-dimensional dense polymers}. 
Journal of Physics A:~Mathematical and general \textbf{19}:~L1009
(1986)

\vspace{-7pt}

\bibitem[DuDa88]{DuDa_hamiltonian-walks}
\textsc{B.~Duplantier and F.~David}. 
\textit{Exact partition functions and correlation functions of multiple hamiltonian walks on the Manhattan lattice}. 
Journal of Statistical Physics \textbf{51}(3-4):~327--434
(1988)

\vspace{-7pt}

\bibitem[FeFu82]{FeiginFuchs_conjecture}
\textsc{B.L.~Feigin and D.B.~Fuks}. 
\textit{Invariant skew-symmetric differential operators on the line and Verma modules over the Virasoro algebra}. 
Functional Analysis and its Applications \textbf{16}:~114--126
(1982)

\vspace{-7pt}

\bibitem[FeFu84]{FeiginFuchs_classification}
\textsc{B.L.~Feigin and D.B.~Fuchs}. 
\textit{Verma modules over the Virasoro algebra} in \mbox{\textit{Topology}}. 
Lecture notes in Mathematics \textbf{1060}:~230--245,
Springer (1984)

\vspace{-7pt}

\bibitem[GaKa99]{GabKau-local}
\textsc{M.R.~Gaberdiel and H.G.~Kausch}. 
\textit{A local logarithmic conformal field theory}. 
Nuclear Physics B \textbf{538}(3):~631--658
(1999)

\vspace{-7pt}

\bibitem[Gur93]{Guraire-log_op_CFT}
\textsc{V.~Gurarie}. 
\emph{Logarithmic operators in Conformal Field Theory}. 
Physics B \textbf{410}(3):~535--549
(1993)

\vspace{-7pt}

\bibitem[Hon10]{Hongler-thesis}
\textsc{C.~Hongler}. 
\emph{Conformal invariance of Ising model correlations}. 
Universit\'e de Gen\`eve
(2010) 
\url{10.13097/archive-ouverte/unige:18163}

\vspace{-7pt}

\bibitem[HKV22]{HKV-CFT_on_lattice}
\textsc{C.~Hongler, K.~Kyt\"ol\"a, and F.~Viklund}. 
\emph{Conformal field theory at the lattice level: discrete complex analysis and Virasoro structure}. 
Communications in Mathematical Physics \textbf{395}:~1--58
(2022)

\vspace{-7pt}



\bibitem[Iva99]{Ivashkevich-dense_polymers}
\textsc{E.V.~Ivashkevich}. 
\textit{Correlation functions of dense polymers and $c=-2$ conformal field theory}. 
Journal of Physics A:~Mathematical and General \textbf{32}:~1691--1699
(1999)

\vspace{-7pt}

\bibitem[IPRH05]{IPRH-dimers_boundary_logCFT}
\textsc{N.S.~Izmailian, V.B.~Priezzhev, P.~Ruelle, and C.K.~Hu}. 
\textit{Logarithmic conformal field theory and boundary effects in the dimer model}. 
Physical Review Letters \textbf{95}:~260602
(2005)

\vspace{-7pt}

\bibitem[Kac79]{Kac-conjecture}
\textsc{V.G.~Kac}. 
\textit{Contravariant form for infinite dimensional algebras and super\-algebras} in \textit{Group theoretical methods in Physics}. 
Lecture notes in Physics \textbf{94}:~441--445,
Springer (1979)

\vspace{-7pt}

\bibitem[Kac97]{Kac-VA beginners}
\textsc{V.G.~Kac}. 
\textit{Vertex Algebras for Beginners}. 
University lecture series \textbf{10},
American Mathematical Society (1997)

\vspace{-7pt}

\bibitem[KaMa13]{KangMakarov-GFF_CFT}
\textsc{N.G.~Kang and N.~Makarov}. 
\textit{Gaussian free field and conformal field theory}. 
Ast\'erisque \textbf{353}
(2013)

\vspace{-7pt}

\bibitem[Kau95]{Kausch-curiosities}
\textsc{H.G.~Kausch}. 
\textit{Curiosities at \( c = -2 \)}.
Preprint:~\href{https://arxiv.org/abs/hep-th/9510149}{\texttt{arXiv:hep-th/9510149}}
(1995)

\vspace{-7pt}

\bibitem[Kau00]{Kausch-SF}
\textsc{H.G.~Kausch}. 
\textit{Symplectic fermions}. 
Nuclear Physics B \textbf{583}(3):~513--541
(2000)

\vspace{-7pt}

\bibitem[Ken00]{Kenyon-conf_inv}
\textsc{R.~Kenyon}. 
\textit{Conformal invariance of the domino tilings}. 
The Annals of Probability \textbf{28}(2):~759--795
(2000)

\vspace{-7pt}

\bibitem[Ken01]{Kenyon-GFF}
\textsc{R.~Kenyon}. 
\textit{Dominos and the gaussian free field}. 
The Annals of Probability \textbf{29}(3):~1128--1137
(2001)

\vspace{-7pt}

\bibitem[KyRi09]{KR-staggered}
\textsc{K.~Kyt\"ol\"a and D.~Ridout}. 
\textit{On staggered indecomposable Virasoro modules}. 
{Journal of Mathematical Physics \textbf{50}:~123503}
(2009)

\vspace{-7pt}

\bibitem[LPW21]{LPW-UST_CFT}
\textsc{M.~Liu, E.~Peltola, and H.~Wu}. 
\textit{Uniform spanning tree in topological polygons, partition functions for SLE(8), and correlations in c $= -$2 logarithmic CFT}. 
The Annals of Probability \textbf{53}(1):~23--78
(2025)

\vspace{-7pt}

\bibitem[MaRu01]{MahieuRuelle-observables}
\textsc{S.~Mahieu and P.~Ruelle.} 
\textit{Scaling fields in the two-dimensional Abelian sandpile model.} 
Physical Review E \textbf{64}:~066130
(2001)

\vspace{-7pt}


\bibitem[MaRi07]{MathieuRidout-percolation_logCFT}
\textsc{P.~Mathieu and D.~Ridout}. 
\textit{From percolation to logarithmic conformal field theory}. 
{Physics Letters B \textbf{657}(1--3):~120--129}
(2007)

\vspace{-7pt}

\bibitem[Rue21]{Ruelle-review}
\textsc{P.~Ruelle}. 
\textit{Sandpile models in the large}. 
{Frontiers in Physics \textbf{9}}
(2021)

\vspace{-7pt}

\bibitem[Sal92]{Saleur-polymers_CFT}
\textsc{H.~Saleur}. 
\textit{Polymers and percolation in two dimensions and twisted $N=2$ supersymmetry}. 
{Nuclear Physics B\textbf{382}(3):~486--531}
(2001)

\vspace{-7pt}

\bibitem[Smi01]{Smirnov-percolation}
\textsc{S.~Smirnov}. 
\textit{Critical percolation in the plane: conformal invariance, Cardy's formula, scaling limits}. 
{Comptes Rendus de l'Acad\'emie des Sciences - Series I - Mathematics \textbf{333}(3):~239--244}
(2001)

\vspace{-7pt}

\end{thebibliography}
\end{document}